\documentclass[fleqn,letterpaper,UKenglish,11pt]{article}
\usepackage[margin=1.6in]{geometry}

\usepackage[l2tabu, orthodox]{nag}
\usepackage[utf8]{inputenc}
\usepackage[pdfusetitle]{hyperref}
\usepackage{microtype}
\usepackage{tcolorbox}
\usepackage{xcolor}

\usepackage{mathtools}

\usepackage{amsmath,amssymb,amsfonts,amsthm}
\theoremstyle{plain}

\newtheorem{theorem}{Theorem}[section]

\newtheorem*{conjecture*}{Conjecture}
\newtheorem{proposition}[theorem]{Proposition}
\newtheorem{lemma}[theorem]{Lemma}

\newtheorem{corollary}[theorem]{Corollary}

\theoremstyle{definition}
\newtheorem{definition}[theorem]{Definition}
\newtheorem{example}[theorem]{Example}

\usepackage{enumerate} 
\usepackage{authblk}
\usepackage{xspace}
\usepackage{todonotes}
\usepackage{booktabs}
\usepackage{graphicx}
\usepackage{multicol}
\usepackage{algorithm}
\usepackage{algorithmic}
\usepackage{cleveref}

\usepackage{tikz}
\usetikzlibrary{matrix,calc,arrows,positioning,graphs}

\newcommand{\cC}{\mathcal{C}}
\newcommand{\cQ}{\mathcal{Q}}

\newcommand{\NN}{\mathbb{N}}

\newcommand{\overbar}[1]{\mkern 0.8mu\overline{\mkern-0.8mu#1\mkern-0.8mu}\mkern 0.8mu}

\newcommand{\XP}{XP}
\newcommand{\PSPACE}{PSPACE\xspace}
\newcommand{\bigoh}{\mathcal{O}}

\newcommand{\vars}{\ensuremath{V}}

\newcommand{\pref}{\ensuremath{\mathcal{Q}}}

\newcommand{\xor}{\oplus}

\newcommand{\repl}[1]{[#1]}
\newcommand{\CCC}{\mathcal{C}}

\newcommand{\TCNF}{\textup{2-CNF}\xspace}
\newcommand{\DCNF}{\textup{$d$-CNF}\xspace}
\newcommand{\CNF}{\textup{CNF}\xspace}

\newcommand{\AFF}{\textup{AFF}\xspace}
\newcommand{\HORN}{\textup{HORN}\xspace}
\newcommand{\CNFAFF}{\textup{CNF+AFF}\xspace}
\newcommand{\DUALHORN}{\textup{DualHORN}\xspace}

\newcommand{\EN}{\ensuremath{\textup{IHSB}_{-}}\xspace}
\newcommand{\EP}{\ensuremath{\textup{IHSB}_{+}}\xspace}

\newcommand{\QBF}{\textsc{QBF}\xspace}
\newcommand{\QCSP}{\textsc{QBF}}

\newcommand{\pol}{\mathrm{Pol}}

\DeclareMathOperator{\maj}{majority}
\DeclareMathOperator{\mnrty}{minority}

\newcommand{\kernel}{\mathrm{Kernel}}
\newcommand{\pivot}{\mathrm{pivot}}
\newcommand{\elim}{\mathrm{elim}}

\newcommand{\cnfprop}[1]{{\texttt{prop}}({#1})}

\newcommand{\dom}{\mathrm{D}}

\renewcommand{\implies}{\Rightarrow}

\crefname{theorem}{theorem}{theorems}
\Crefname{theorem}{Theorem}{Theorems}
\crefname{lemma}{lemma}{lemmas}
\Crefname{lemma}{Lemma}{Lemmas}
\crefname{definition}{definition}{definitions}
\Crefname{definition}{Definition}{Definitions}

\newif\iflong
\newif\ifshort

\longtrue
    
\iflong
\else
\shorttrue
\fi

\begin{document}

\title{Clausal Deletion Backdoors for QBF: a Parameterized Complexity Approach}

\author[1]{Leif Eriksson}
\author[2]{Victor Lagerkvist}
\author[3]{Sebastian Ordyniak}
\author[2,4]{George Osipov}
\author[3]{Fahad Panolan}
\author[1]{Mateusz Rychlicki}

\affil[1]{Independent researcher}
\affil[2]{Link{\"o}ping University, Sweden}
\affil[3]{University of Leeds, UK}
\affil[4]{Royal Holloway, University of London, UK}

\date{}
\maketitle

\begin{abstract}
Determining the validity of a quantified Boolean formula (QBF) is a \PSPACE-complete problem with rich expressive power. Despite interest in efficient solvers, there is, compared to problems in NP, a lack of positive theoretical results, and in the parameterized complexity setting one often has to restrict the quantifier prefix (e.g., bounding alternations) to obtain fixed parameter tractability (FPT). We propose a new parameter: the number of variables in clauses that has to be removed before reaching a tractable class (a {\em clause covering (CC) backdoor}). We are then interested in solving QBF in FPT time given a CC-backdoor of size $k$. We consider the three classical, tractable cases of QBF as base classes: \emph{Horn}, \emph{2-CNF}, and \emph{linear equations}. We establish $\mathrm{W}[1]$-hardness for Horn but prove FPT for the others, and prove that in a precise, algebraic sense, we are only missing one important case for a full dichotomy. Our algorithms are non-trivial and depend on propagation, and Gaussian elimination, respectively, and are comparably unexplored for QBF.
\end{abstract}

\section{Introduction}

The {\em Boolean satisfiability problem} (SAT) is the canonical NP-complete problem. Many hard problems can today be solved efficiently by a suitable SAT encoding together with off-the-shelf solvers~\cite{biere2021handbook}. While this often works well for problems in NP, many fundamental problems in AI and machine learning are {\em harder} than NP, with \PSPACE crystallizing as the most fundamental class. Examples include neural network training \cite{ml2}, graph neural network satisfiability \cite{ml4}, and quantized graph neural network verification \cite{ml5}, all of which are \PSPACE-complete or contained in \PSPACE. In symbolic AI, planning problems like STRIPS and SAS$^+$ also fall inside \PSPACE. 

Solving \PSPACE problems is thus a pressing issue. The canonical hard problem is the {\em quantified Boolean formula} problem (QBF), deciding the validity of $\Psi = Q_1 x_1 \ldots Q_n x_n . \phi(x_1,\ldots,x_n)$ where $Q_i \in \{\forall, \exists\}$ and $\phi$ is in \CNF. Unfortunately, we lack fast techniques for solving QBF instances. Theoretically, $\Psi$ can be solved in $2^n$ time by branching: $\exists$ requires one successful branch while $\forall$ requires both. While SAT's $2^n$ complexity can be improved to $c_k^n$ for $c_k < 2$ for any $k$-\CNF, no such improvement is known even for $k = 3$ for QBF, and it is conjectured impossible unless the {\em strong exponential-time hypothesis} (SETH) \cite{impagliazzo2009} fails \cite{CalabroIP13}.

Hence, we move to {\em parameterized complexity} to identify
parameters $k$ (corresponding to a ``hidden structure'') leading to
tractability. We seek {\em fixed-parameter tractability}
($\mathrm{FPT}$), i.e., $f(k) \cdot ||\Psi||^{\bigoh(1)}$ time for a
computable $f$. $\mathrm{FPT}$ is also characterized by a
\emph{kernel}: a polynomial-time procedure reducing an instance
$(\Psi, k)$ to $(\Psi', k')$ where $||\Psi'|| + k' \leq f(k)$. A
well-known parameter is {\em treewidth}, measuring how tree-like a
formula is. 
  For instance, while $\QBF$ remains \PSPACE-complete even
  for constant treewidth~\cite{ATSERIAS20141415}
  it is $\mathrm{FPT}$
  parameterized by treewidth
  plus quantifier alternations. However, even in this case, $f$ is a
  {\em power tower} which, even for small values, requires more
  operations than there are atoms in the universe.

The most successful parameter besides treewidth is arguably {\em backdoor size}, which measures how close an instance is to a tractable class. For SAT, a (strong) backdoor for $\phi$ is a set $B \subseteq V(\phi)$ such that $\phi[\tau]$ belongs to a fixed, tractable class (e.g., $2$-\CNF) for each partial assignment $\tau \colon B \to \{0,1\}$. Here, $\phi[\tau]$ is the formula obtained by replacing variables in $B$ with their values under $\tau$ and simplifying, and $V(\phi)$ is the set of variables in $\phi$. However, backdoors cannot readily be applied to QBF since branching on $B$ may violate the quantifier prefix. This can be circumvented with restrictions on the quantifier prefix~\cite{SamerSzeider07a} (e.g., that backdoor variables are to the left), but this only provides a meaningful speed-up for a very narrow class of QBF formulas.

In this paper we investigate an alternative QBF backdoor notion which avoids any such restrictions. We primarily consider 
$\TCNF$, $\HORN/\DUALHORN$ (at most one positive/negative literal in each clause), and $\AFF$ (systems of linear equations over GF(2)).
We assume that the formulas in each class are conjunctions of clauses/atoms, e.g., each formula in $\TCNF$ is a conjunction of clauses of arity 2. For a class of formulas $\cC$ we write $\QBF(\cC)$ for the $\QBF$ problem restricted to instances in $\cC$.
A first idea could be to measure the distance to a tractable class $\CCC$ with respect to the number of \emph{clauses}  that are outside $\CCC$. I.e., if we view $\phi$ as a set of clauses, we let $\phi_{\CCC} \subseteq \phi$ be the maximal subset of clauses in $\phi$ such that $\phi_{\CCC}$ belongs to $\CCC$, and parameterize by $|\phi \setminus \phi_{\CCC}|$. %
Unfortunately, allowing even one out-of-class clause to be added to a 2-SAT or a Horn formula makes the problem \PSPACE-complete~\cite{DBLP:conf/lics/FichteGHSO23}.
Thus, we propose parameterization by $|V(\phi \setminus \phi_{\CCC})|$, i.e., the number of variables that cover all out-of-class clauses, leading to the following definition.

\begin{definition}
  Let $\mathcal{C}$ be a class of formulas. Let $\Phi=\cQ.\phi$ be a
   $\QBF$ formula. A set $B$ of variables
  is a \emph{clause covering backdoor} (CC-backdoor) for $\Phi$ if
  $\phi\setminus C \in \cC$ for a subset $C \subseteq \phi$
  of clauses such that $V(C) = B$.
\end{definition}  

Note that the {\em CC-backdoor detection problem} is trivial for all
classical base classes: if the tractable class (e.g., \TCNF) is fixed
then we simply collect all out-of-class clauses, compute the set of
variables covering them, and see whether it has the required
size. Hence, the computationally interesting problem is {\em
  CC-backdoor evaluation}: given a QBF formula $\Psi$ and CC-backdoor
of size $k$, can $\Psi$ be evaluated in $\mathrm{FPT}$ time with
respect to $k$?

Given the lack of
$\mathrm{FPT}$ results for QBF with unbounded quantifier alternations, there is
perhaps not much reason to be optimistic.
Furthermore, the standard quantifier elimination does not seem directly compatible with
CC-backdoor evaluation problem: by eliminating backdoor variables we get a complicated disjunctive formula which does not appear to be easier to solve than the original formula.
The main complication, then, is that the backdoor variables may be involved in the in-the-class clauses, i.e., some of them may be in $V(\phi_{\CCC})$, so it is not possible to solve the backdoor part independently.
Moreover, they can occur anywhere in the quantifier prefix.

Despite this grim outlook we manage to prove $\mathrm{FPT}$ for \TCNF
and \AFF. For \TCNF we use an elaborate branching strategy with
look-ahead. Essentially, by going from left to right in the quantifier
prefix we prove that one either (1) can branch and always ``hits'' a
variable in the backdoor and therefore decreases the parameter $k$, or (2) if branching would not affect the backdoor variables, then one can disregard one of the branches. Then, depending on the current quantifier ($\exists$ or $\forall$) we either continue the process or correctly report unsatisfiability. Concurrently, we repeatedly propagate the input formula to ensure that we do not miss any immediate implications. Put together, this results in an $\mathrm{FPT}$ algorithm where $f(k) = 2^k$ and with a polynomial factor dominated by $n^3$ from propagation.
Under SETH, the dependence on $k$ cannot be improved: in the special case when all clauses are out-of-class, the CC-backdoor evaluation problem is as hard as solving an arbitrary QBF instance, which cannot be solved substantially faster than $2^{n}$ time under SETH.
We stress that, a priori, it is rather surprising that non-trivial branching is applicable: while the algorithm can be stated in a rather succinct form it is non-trivial to prove correctness.

The affine case, however, cannot be solved in $\mathrm{FPT}$-time by branching in the order of quantifier prefix, 
and requires a drastically different strategy.
For this case it is conceptually easier to construct a kernel instead of
deriving an $\mathrm{FPT}$ algorithm. Naturally, the main
complication is that our input instance $\Psi = Q_1 x_1 \ldots Q_n
x_n . \phi$ contains both affine equations (easily solvable by
Gaussian elimination) and arbitrary clauses (where Gaussian
elimination is not applicable). To handle this we isolate the
influence of backdoor variables by a carefully orchestrated sequence
of choosing which variables to pivot on, and 
finally reach the desired conclusion where at most $2k$ variables remain. 
By branching on these variables and observing that at most
$k$ of them are not determined by previous choices,
i.e., only $k$ need to be branched upon,
we can then solve the problem in $\bigoh^*(2^k)$ time, 
and the running time cannot be improved under the SETH.

Inspired by these two positive results it is natural to consider to which extent our classification is complete, i.e., are there more FPT cases among the natural tractable classes of QBF?
We begin studying this in Section~\ref{sec:cc_horn} where we as a first step prove a negative result for $\HORN$ (and $\DUALHORN$): CC-backdoor evaluation is {\em not} $\mathrm{FPT}$ unless a widely believed conjecture in parameterized complexity is false ($\mathrm{FPT} \neq \mathrm{W}[1]$). This does not cover everything, however, and we manage to extend W[1]-hardness to 
the class of \emph{implicative hitting set-bounded} formulas $\EN \subseteq \HORN$ (negative clauses of any length, and binary implication) and its dual $\EP \subseteq \DUALHORN$ (positive clauses of any length, and binary implication). It is easy to show that the problem becomes FPT if we remove implication. However, in contrast, keeping implication and all positive (or negative) clauses of arity at most $d$ ($d$-$\EP$, $d$-$\EN$) results in a problem where it seems much more challenging to obtain FPT.
With this troublesome case in mind we investigate (in Section~\ref{sec:qcsp}) exactly how far away we are from a complete CC-evaluation dichotomy (FPT versus W[1]-hard). We do this not only for all possible sets of backdoor clauses, but for the extension of QBF to arbitrary constraints, the \emph{quantified constraint satisfaction problem} (QCSP). Here, we show that the so-called \emph{algebraic approach} is compatible with CC-backdoors, which greatly reduces the number of cases that we need to consider. With this machinery we manage to show that the CC-evaluation problem is always either FPT or W[1]-hard \emph{except} for $d$-$\EP$ (and $d$-$\EN$) for $d \geq 3$ where the complexity status is unknown.

Put together, we thus manage to solve arbitrary QBF instances in optimal time with respect to the number of variables in the CC-backdoor, {\em without} any assumptions on the quantifier prefix. We accomplished this with two different algorithmic schemes, branching and Gaussian elimination, and the latter is overlooked in the QBF literature. CC-backdoors may thus open up the usage of algorithmic schemes that do not work for the full QBF problem, while still being reasonably expressive since we do not have to make any assumptions on the quantifier prefix. Hence, we have obtained a novel paradigm for solving QBF, and in the long term, possibly more \PSPACE-complete problems. We discuss this and other open questions in Section~\ref{sec:discussion}.

\ifshort
\smallskip
\noindent {\emph{Statements whose full proofs are omitted due to
space constraints are marked with $\star$ and the full proofs can be
found in the supplementary material.}}
\fi

\section{Preliminaries} \label{sec:preliminaries}
We begin by formally introducing the quantified Boolean formula problem and some additional notation.
\subsection{Clauses and Relations}

Boolean expressions, formulas, variables, 
literals, conjunctive normal form (\CNF), clauses and atoms are
defined in the standard way~(cf.~\cite{biere2021handbook}).
We treat $1$/$\top$ and $0/\bot$ as the truth values
``true'' and ``false'', respectively, 
and formulas in \CNF as sets of clauses where each clause is a set of literals. 
More generally, it is convenient to use a relational perspective where atoms are expressions of the form $R(x_1,  \ldots, x_r)$ where $x_1, \ldots, x_r$ are variables and $R \subseteq \{0,1\}^r$ is an $r$-ary relation. 
A function $f \colon \{x_1, \ldots, x_r\} \to \{0,1\}$ satisfies an atom $R(x_1, \ldots, x_r)$ if $(f(x_1), \ldots, f(x_r)) \in R$. Note that clauses can be defined in this way, e.g., the clause $(x \lor y \lor \lnot z)$ is equivalent to an atom $R(x,y,z)$ where $R = \{0,1\}^3 \setminus \{(0,0,1)\}$.

Given a Boolean formula $\phi$, we write $\vars(\phi)$ for the set of variables in $\phi$. 
For a (partial) assignment $\tau : V' \rightarrow \{0,1\}$, where $V'\subseteq V(\phi)$, 
and an atom $R(x_1, \ldots, x_r)$, 
we let $R(x_1, \ldots, x_r)[\tau]$ be the atom obtained by (1) removing any tuple $(b_1, \ldots, b_r) \in R$ if there is an $1 \leq i \leq r$ where $x_i \in V'$ and $\tau(x_i) \neq b_i$, and then (2) projecting away any $x_i$ such that $x_i \in V'$. 
This easily extends to conjunctions of atoms, and for a formula $\phi$ we thus let 
$\phi[\tau]$ be the formula obtained by simplifying each atom according to $\tau$.
If $V' = \{x\}$ and $\tau(x) = b$, then we simply write $\phi\repl{x = b}$.
In the special case when $\phi$ is in \CNF, $\phi[\tau]$ is the formula obtained from $\phi$ by removing all clauses 
satisfied by $\tau$ and removing all literals from clauses that are set to $0$ by $\tau$. 

Let us now define the following classes of conjunctive Boolean formulas.

\begin{enumerate}
\item
 \DCNF for the class of formulas in \CNF where each clause has at most $d$ literals.
\item
 \HORN for the class of formulas in \CNF where each clause has at most one positive literal.
 \item $\EN \subseteq \HORN$ restricted to purely negative clauses, unit clauses, and implications $(x \implies y)$, and $d$-$\EN \subseteq \EN$ for the restriction to at most $d$-ary clauses.
\item
\AFF for the class of conjunctive formulas with atoms of the form
  $(\ell_1 \oplus \ell_2 \oplus \dots \oplus \ell_k) = b$ where $b
  \in \{0,1\}$ and each $\ell_i = b_i x_i$ for a variable $x_i$ and
  coefficient $b_i \in \{0,1\}$. We 
  represent the atoms of affine formulas by a pair $(A,b)$, where $A$ is
  a set of variables and $b \in \{0,1\}$. 
  We refer to such atoms as \emph{equations}.
  \item 
  If $\CCC$ and $\CCC'$ are both classes of formulas then we write $\CCC + \CCC'$ for the class of formulas where each atom is in $\CCC \cup \CCC'$.
  \CNFAFF for the class of conjunctive formulas where each atom is either in $\CNF$ or in $\AFF$.
\end{enumerate}
We sometimes also consider dual cases and write $\DUALHORN$,  $\EP$  and $d$-$\EP$ for the classes obtained by flipping the literals of $\HORN$, $\EN$, and $d$-$\EN$. We say that a conjunctive formula is in \emph{clausal form} if all
its atoms are clauses.

\begin{example}
Consider the propositional formula 
\[ \phi = (x_1 \lor x_2 \lor x_3) \land (x_1 \oplus x_2 = 1)\] 
in $\CNFAFF$. 
Here $(x_1 \lor x_2 \lor x_3) \in \CNF$ and $(x_1 \oplus x_2 = 1) \in \AFF$. In the relational perspective we view $\phi$ as $R(x_1, x_2, x_3) \land R'(x_1, x_2)$ where $R = \{0,1\}^{3} \setminus \{(0,0,0)\}$ and $R' = \{(0,1), (1,0)\}$.
\end{example}

The {\em satisfiability} problem (SAT) for $\phi$ is then simply to determine whether there exists a satisfying assignment $f \colon \vars(\phi) \to \{0,1\}$, and by restricting $\phi$ to formulas in $d$-\CNF we obtain the well-known {\em $d$-SAT} problem.

When working with \TCNF it will be convenient to assume that the set of clauses is maximally propagated in the sense that no further clauses can be inferred via resolution.
For example, the formula $(\lnot x \lor y) \land (\lnot y \lor z)$ is not propagated because
it also implies $(\lnot x \lor z)$. 
Adding the latter makes the formula propagated.
We formalize this as follows:
let $C_1 = (\ell \lor A)$ and $C_2 = (\lnot \ell \lor B)$ be two
clauses in a \CNF formula, where $\ell$ is a literal. The {\em
  resolution} of $C_1$ and $C_2$ derives the {\em resolvent} clause $C = (A \lor B)$. 
We say that a $\TCNF$ formula $\phi$ is {\em propagated} if 
(1) $\phi = \{\bot\}$ only if $\bot \in \phi$ or $\emptyset \in \phi$\footnote{Recall that $\bot$ is the constant 0, while $\emptyset$ is the empty clause, logically equivalent to $\bot$.},
(2) $\phi$ is closed under resolution, and 
(3) no clause is a super-clause of another clause. 
Let $\cnfprop{\phi}$ denote the \TCNF formula $\phi'$ such that 
$\phi'$ is propagated and $\phi \equiv \phi'$. 
It is well-known that given a \TCNF formula $\phi$,
the propagated formula $\phi'$ can be computed in polynomial time
(see e.g~\cite{biere2021handbook}).

\begin{proposition} \label[prop]{prop:prop_comp}
Let $\phi$ be a \TCNF formula. For any \TCNF formula $\phi$, 
$\cnfprop{\phi}$ can be computed in $\bigoh(|V(\phi)|^3)$ time, 
and $\cnfprop{\phi}$ has at most $|V(\phi)|^2$ clauses.
\end{proposition}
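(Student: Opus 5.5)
We prove the proposition through the standard implication-graph view of a \TCNF formula; the main work is computing the transitive closure within the stated time bound. Let $n = |V(\phi)|$. First, dispose of the trivial cases. Remove duplicate literals and tautological clauses $(x \lor \lnot x)$. If $\phi$ contains the empty clause, output $\{\emptyset\}$. Every remaining clause is unary or binary. Over the $2n$ literals there are at most $\bigoh(n^2)$ distinct such clauses, so after deduplication we may assume $|\phi| = \bigoh(n^2)$.

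Next, build the implication graph $G$ on the $2n$ literals. For each binary clause $(\ell_1 \lor \ell_2)$, add the arcs $\lnot\ell_1 \to \ell_2$ and $\lnot \ell_2 \to \ell_1$. Encode each unit clause $(\ell)$ as the arc $\lnot \ell \to \ell$. Resolution on two clauses corresponds exactly to composing arcs in $G$. Hence closing $\phi$ under resolution amounts to computing the transitive closure of $G$: a path from $\lnot\ell_1$ to $\ell_2$ yields the clause $(\ell_1 \lor \ell_2)$, which is $(\ell_1)$ when $\ell_1 = \ell_2$.

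Compute the closure by running a BFS/DFS from every literal. Each search costs $\bigoh(2n + |E|) = \bigoh(n^2)$, so the total is $\bigoh(n^3)$. This stays within budget because the arc set has size $\bigoh(n^2)$ even as the closure is formed.

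Finally, read off $\cnfprop{\phi}$ from the closure.
\begin{itemize}
\item If for some variable $x$ both $x \to \lnot x$ and $\lnot x \to x$ lie in the closure, then $\phi$ is unsatisfiable. In this case output $\{\emptyset\}$, which satisfies condition (1). This is the standard 2-SAT characterization, and it also covers conditions (2) and (3) trivially.
\item Otherwise, collect every derived clause: the unit clauses $(\ell)$ for which $\lnot\ell \to \ell$ holds, and the binary clauses. Then enforce condition (3) by deleting every binary clause that contains a derived unit literal.
\end{itemize}

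We then check that the output is propagated. For closure under resolution, a resolvent of two retained clauses is represented by a path in $G$, so it is either present or subsumed by a unit clause. Resolving a unit clause $(\ell)$ with $(\lnot \ell \lor m)$ gives $(m)$. The arc $\lnot m \to \ell$ exists because $(\lnot\ell \lor m)$ was derived, so $\lnot m \to \ell \to \ldots$ composes to $\lnot m \to m$, and $(m)$ is already a unit clause. Condition (1) holds because in the satisfiable case the 2-SAT theorem guarantees the output is not equivalent to $\bot$. Equivalence with $\phi$ holds because resolution is sound and subsumption preserves models.

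For the size bound, every output clause is a set of at most two literals over $n$ variables. Once tautologies are removed and subsumed clauses deleted, count them. There are $2n$ possible unit clauses. The possible binary clauses on distinct variables number $4\binom{n}{2} = 2n(n-1)$. If a variable $x$ has a unit clause, it appears in no binary clause, and $x$ and $\lnot x$ cannot both be units in the satisfiable case. Charging each unit against the binary clauses it excludes therefore gives at most $n^2$ clauses overall; for small $n$, such as $n = 1$, the count is checked directly. Some care is needed here to get exactly $n^2$ rather than $\bigoh(n^2)$, and this sharp constant is the main obstacle; the rest is standard.
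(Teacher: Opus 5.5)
The paper gives no proof of this proposition; it cites it as folklore (the SAT handbook). Your implication-graph and transitive-closure argument for the $\bigoh(n^3)$ running time, and for the output being propagated, is the standard route and is fine. One small slip: the clause $(\lnot\ell\lor m)$ gives the arc $\lnot m\to\lnot\ell$, not $\lnot m\to\ell$. The path you want is $\lnot m\to\lnot\ell\to\ell\to m$.

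The genuine gap is the bound of $n^2$ clauses, and your charging argument cannot close it. If the output contains no unit clauses, there is nothing to charge. The number of possible binary clauses on distinct variables, $2n(n-1)$, already exceeds $n^2$ for every $n\ge 3$. The missing idea is that propagation rules out most of these clauses, which your count does not use.

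Fix a pair of variables $\{x,y\}$ and consider the four possible binary clauses over it.
\begin{itemize}
\item Any two distinct ones that share a literal differ in the sign of the other variable. So they resolve to a unit clause: for example, $(x\lor y)$ and $(x\lor\lnot y)$ give $(x)$.
\item Your procedure then derives that unit via the path $\lnot x\to y\to x$ and deletes both binary clauses by subsumption. In general, conditions (2) and (3) of ``propagated'' forbid such a pair.
\item The only literal-disjoint pairs are $\{(x\lor y),(\lnot x\lor\lnot y)\}$ and $\{(x\lor\lnot y),(\lnot x\lor y)\}$. Any three of the four clauses contain a pair that shares a literal.
\end{itemize}
Hence at most two binary clauses survive per pair of variables, so at most $2\binom{n}{2}=n(n-1)$ binary clauses in total. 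In the satisfiable case there is at most one unit clause per variable, which adds at most $n$, for a total of at most $n^2$. In the unsatisfiable case the output is a single clause. With this counting step added, your proof is complete.
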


\subsection{The Quantified Boolean Formula Problem}

A \emph{quantified Boolean formula (QBF)} 
is of the form $\cQ. \phi$, where 
$\cQ = Q_1 x_1 \ldots Q_n x_n$ with
$Q_i \in \{\forall, \exists\}$ for all $1 \leq i \leq n$
is the \emph{(quantifier) prefix},
$x_1, \dots, x_n$ are variables, and
$\phi$ is a Boolean formula 
on these variables
called the
\emph{matrix}. For any set $S=\{x_i,\ldots,x_j\}\subseteq V(\phi)$ the variable $x_i$ said to be the {\em outermost} variable if no other variable in $S$ occurs before $x_i$ in $\pref$ and similarly $x_n$ the {\em innermost} if no other variable in $S$ occurs after $x_i$ in $\pref$. When speaking of the entire quantifier prefix without any specific subset then $x_1$ is the outermost and $x_n$ the innermost.
If $Q_i = \forall$, we say that
$x_i$ is a \emph{universal variable},
and if $Q_i = \exists$, we say that
$x_i$ is an \emph{existential variable}.

The truth value of a QBF $\cQ. \phi$ is 
defined recursively.
Let $\cQ = Q_1 x_1 \dots Q_n x_n$ and
$\cQ' = Q_2 x_2 \dots Q_n x_n$.
Then $\cQ. \phi$ is \emph{true} if
\begin{itemize}
  \item $\phi$ is a true Boolean expression, or
  \item $Q_1 = \exists$ and 
  $\cQ'. \phi\repl{x_1 = 0}$ \emph{or} $\cQ'. \phi\repl{x_1 = 1}$ is true, or
  \item $Q_1 = \forall$ and 
  $\cQ'. \phi\repl{x_1 = 0}$ \emph{and} $\cQ'. \phi\repl{x_1 = 1}$ are true.
\end{itemize}
Otherwise, $\cQ. \phi$ is \emph{false}. 
For a QBF $\Phi=\pref.\phi$ and a partial assignment $\tau : V' \rightarrow \{0,1\}$, where $V'\subseteq V(\Phi)$, we write $\Phi[\tau]$ for the QBF formula $\pref'.\phi[\tau]$, where $\pref'$ is obtained from $\pref$ after removing the variables in $V'$.

\begin{definition}
For a class of formulas $\CCC$ we write $\QBF(\CCC)$ 
for the computational problem of deciding whether a QBF formula 
$\cQ . \phi$, where $\phi \in \CCC$, is true.
We may also refer to an instance of 
this problem as a $\QBF(\CCC)$ formula.
\end{definition}

The evaluation  of a QBF  $Q_1 x_1 \dots Q_n x_n . \phi$ can be seen as a two-player game between the {\em universal} and {\em existential} players. In the $i$th step the player $Q_i$ assigns value to the variable $x_i$. The existential player wins the game if $\phi$ evaluates
to true under the assignment constructed in the game and the universal player wins
if $\phi$ evaluates to false.

A \emph{strategy} for a player is a rooted binary tree \( T \) of
depth \( n+1 \) where nodes at level \( j \in \{0, \ldots, n-1\} \)
are labeled \( x_{j+1} \), leaves are labeled \( \top \) or \( \bot
\), and edges are labeled \( 0 \) or \( 1 \). If $T$ is an existential
strategy, nodes labeled by existential variables have one child, and
those by universal variables have two; the roles are reversed for
universal strategies. Each node \( t \) corresponds to the assignment
\( \tau_t^T \) given by the vertex and edge labels on the root-to-\( t
\) path, defining \( \phi[\tau_l^T] \) as the label of each leaf \( l
\). An existential strategy is \emph{winning} if all leaves are \( \top \),
and a universal strategy is \emph{winning} if all leaves are \( \bot \).

\begin{proposition}
A QBF formula $\Phi$ is true ($\Phi \Leftrightarrow \top$) if and only if the existential player has a winning strategy, and $\Phi$ is false ($\Phi \Leftrightarrow \bot$) if and only if 
the universal player has a winning strategy.     
\end{proposition}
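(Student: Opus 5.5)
The argument is an induction on the number $n$ of quantified variables. The recursive definition of truth is matched step by step by composing strategy trees. One subtlety in the formal definition has to be handled first. A leaf $l$ is labeled by $\phi[\tau_l^T]$, and $\tau_l^T$ assigns every variable. So this label is a constant, namely the truth value of $\phi$ under a total assignment. It is therefore exactly $\top$ or $\bot$, and we can speak of a strategy being winning without ambiguity.

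\emph{Base case} $n=0$. The matrix has no variables and is a constant Boolean expression. The only strategy tree is the root itself, which is a single leaf labeled by $\phi$. It is winning for the existential player iff $\phi=\top$, and winning for the universal player iff $\phi=\bot$. Both claims agree with the definition.

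\emph{Inductive step.} Write $\Phi=Q_1x_1\,\cQ'.\phi$ and, for $b\in\{0,1\}$, let $\Phi_b=\cQ'.\phi\repl{x_1=b}$, which has $n-1$ variables. We prove both claims together by case analysis on $Q_1$.

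If $Q_1=\exists$ and $\Phi$ is true, then some $\Phi_b$ is true. By induction it has an existential winning tree $T_b$. We build a root labeled $x_1$ with a single edge labeled $b$ leading to $T_b$; each leaf assignment then extends by $x_1=b$, so every leaf is still $\top$. Conversely, an existential winning tree for $\Phi$ has one child along some edge $b$. The subtree below it is an existential winning strategy for $\Phi_b$, so $\Phi_b$ is true by induction, and hence $\Phi$ is true.

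If $Q_1=\forall$ and $\Phi$ is false, the same argument with roles swapped gives a universal winning strategy: the root has a single child along the edge $b$ for which $\Phi_b$ is false. For the remaining two combinations we use two children. If $Q_1=\exists$ and $\Phi$ is false, both $\Phi_0$ and $\Phi_1$ are false, so we join the universal winning trees for $\Phi_0$ and $\Phi_1$ under a root with two children. If $Q_1=\forall$ and $\Phi$ is true, both $\Phi_0$ and $\Phi_1$ are true, and we join the two existential trees in the same way. Each converse again follows by restricting the tree to the subtrees at the root.

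The only step needing care is the bookkeeping that identifies subtrees with strategies for $\Phi_b$. Strategies for $\Phi_b$ are defined over the prefix $\cQ'$ and label leaves by $(\phi\repl{x_1=b})[\tau']$. We must check that this equals $\phi[\tau'\cup\{x_1\mapsto b\}]$, which follows directly from the definition of $[\tau]$ as sequential restriction of each atom. We must also check the relabelling of depth, so that level $j$ of the subtree becomes level $j+1$ of the whole tree.

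Finally, the induction shows that $\Phi$ is true iff the existential player wins and false iff the universal player wins. Since every $\Phi$ is either true or false, exactly one of the two players has a winning strategy.
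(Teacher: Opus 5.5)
Your proof is correct. The paper gives no proof of this proposition and treats it as the standard game-semantics characterization of QBF. Your induction on the length of the prefix, which splits a strategy tree at its root and glues strategies for $\Phi_0,\Phi_1$ under a new root, is exactly the standard argument the paper leaves implicit.

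Two things you use tacitly deserve a sentence each.

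First, the paper's recursive definition of truth has the extra clause ``$\phi$ is a true Boolean expression'', and that clause is not restricted to $n=0$. Your steps ``$\Phi$ true $\Rightarrow$ some (or both) $\Phi_b$ true'' and ``$\Phi_b$ false $\Rightarrow$ $\Phi$ false'' need one more remark. If $\phi$ is true, then so is $\phi\repl{x_1=b}$, so for $n\ge 1$ this clause is subsumed by the quantifier clauses.

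Second, at a node with two children the outgoing edges must carry the distinct labels $0$ and $1$. The paper's definition of a strategy only says that such nodes have two children. Your converse direction, that the subtrees at the root are strategies for \emph{both} $\Phi_0$ and $\Phi_1$, relies on the distinct labels. Without this reading the proposition is actually false. For $\forall x.\,(x)$, take an existential tree whose root has two children both reached along an edge labeled $1$; all its leaves are $\top$, yet the formula is false. You should state this reading of the definition explicitly.
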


\begin{example} \label{ex:tcnf_backdoor}
Consider the QBF formula
\[
  \Phi = \exists x_1\, \forall x_2\, \exists x_3\, \exists x_4\, \exists x_5\,.\,(\phi_1 \land \phi_2),
\]
where
\[
  \phi_1 = (x_1 \lor x_3) \land (\neg x_1 \lor x_4) \land (x_3 \lor x_4) \land (x_2 \lor x_5)
\]
is in $\TCNF$ (and is propagated) and
\[
  \phi_2 = (\neg x_3 \lor \neg x_4 \lor \neg x_5).
\]
Then $V(\phi_2) = \{x_3, x_4, x_5\}$ is a CC-backdoor of size $3$ into $\TCNF$. Note that the backdoor variables also appear in $\phi_1$ and lie in the inner part of the prefix, behind the universal~$x_2$, so they cannot be branched on directly without first handling earlier quantifiers.
\end{example}

\subsection{Parameterized Complexity}

We use standard notation and primarily follow~\cite{downey2013fundamentals}~and~\cite{fomin2019kernelization}.
For a finite alphabet $\Sigma$ a 
{\em parameterized problem} $L$ is a subset of $\Sigma^* \times \NN$.
The problem $L$ is \emph{fixed-parameter tractable} (or, in $\mathrm{FPT}$)
if there is an algorithm deciding 
whether an instance $(I, k) \in \Sigma^* \times \NN$ is in $L$
in time $f(k) \cdot |I|^c$, where
$f$ is some computable function and
$c$ is a constant independent of $(I, k)$. 
To state $\mathrm{FPT}$ results we sometimes use the $\bigoh^*$ notation which suppresses polynomial factors in the input size, i.e., we simply write $\bigoh^*(f(k))$ rather than $f(k) \cdot |I|^c$.
An equivalent definition of $\mathrm{FPT}$ can be given by a \emph{kernelization (algorithm)} for $L$, i.e.,
an algorithm that takes $(I, k) \in \Sigma^* \times \NN$ as input and
in time polynomial in $|I|+k$, 
outputs $(I',k') \in \Sigma^* \times \NN$ such that $(I,k)
  \in L$ if and only if $(I',k')\in L$, and $|I'|,k' \leq h(k)$ for
  some computable function $h$. 

Let $L, L' \subseteq \Sigma^* \times \NN$ be two parameterized problems.
A mapping $P : \Sigma^* \times \NN \to \Sigma^* \times \NN$ is a 
\emph{fixed-parameter ($\mathrm{FPT}$) reduction from $L$ to $L'$}
if there exist computable functions $f,p : \NN \to \NN$
and a constant $c$ such that (1) $(I,k) \in L$ if and only if $P(I,k) = (I',k') \in L'$,
(2) $k' \leq p(k)$, and
(3) $P(I,k)$ can be computed in $f(k) \cdot |I|^c$ time.
We write $L \leq_{\mathrm{FPT}} L'$ if this holds, and if, additionally, $L' \leq_{\mathrm{FPT}} L$ then $L$ and $L'$ are said to be \emph{FPT-equivalent}.
Parameterized complexity also contains a complementary theory of hardness. Here, the hard classes $\mathrm{W}[1]\subseteq \mathrm{W}[2] \subseteq \ldots $, form a hierarchy of classes. We say that a problem is {\em $\mathrm{W}[1]$-hard} if it admits an $\mathrm{FPT}$-reduction from \textsc{Independent Set}
(parameterized by the number of vertices in the independent set).
For sharper lower bounds, stronger assumptions are sometimes necessary. For $d \geq 3$, let $c_d$ denote the infimum of all constants $c$ such that $d$-SAT is solvable in $2^{c n}$ time by a deterministic algorithm. The {\em exponential-time hypothesis} (ETH) then states that $c_3 > 0$, i.e., that $3$-SAT is not solvable in subexponential time.
The {\em strong exponential-time hypothesis} (SETH) additionally conjectures that the limit of the sequence $c_3, c_4, \ldots$ tends to $1$, which in particular is known to imply that the satisfiability problem for clauses of arbitrary length
is not solvable in $2^{c n}$  time for {\em any} $c < 1$~\cite{impagliazzo2009}.

\section{Backdoors into \TCNF} \label{sec:cc_2cnf}

Let $\Phi = Q_1 x_1\ldots Q_n x_n . \phi$ be a $\QBF(\CNF)$ formula.
Suppose $\phi = \phi_1 \land \phi_2$, where $\phi_1 \in \TCNF$
and $|V(\phi_2)| = k$, i.e., $V(\phi_2)$ is a CC-backdoor to $\TCNF$.
Before presenting our $\mathrm{FPT}$ algorithm we introduce some simplifying notation and formalia when branching on $b \in \{0,1\}$ for the outermost variable $x_1$, and analyzing the unit propagations caused by 
this assignment on the remaining instance.
For simplicity we will assume $\phi_1=\cnfprop{\phi_1}$ and that $\phi$ does not contain any unit-clauses, as if any such exist, we can assign values to any variable in such a clause trivially.
We first define the two $1$-\CNF formulas 
\[ u^0_{\phi_1}=\cnfprop{\phi_1\land (\neg x_1)}\setminus\phi_1 \] 
and 
\[ u^1_{\phi_1}=\cnfprop{\phi_1\land (x_1)}\setminus\phi_1, \] 
which are thus simply the variables affected by assigning $0$ or $1$ to  $x_1$. 
Note that $u^0_{\phi_1}$ and $u^1_{\phi_1}$ define partial assignments.
Formally, for $b \in \{0,1\}$ we define 
the function $U^b_\Phi \colon V_\exists(u^b_{\phi_1})\cup\{x_1\} \to \{0, 1\}$ 
as follows.
For $x_1$ the function $U^b_\Phi$ is defined as
$U^b_{\Phi}(x_1) = b$.
For all $x \in V_{\exists}(u^b_{\phi_1})$,
$U^b_{\Phi}(x) = 1$ if $u^b_{\phi_1}$ contains a unit clause $(x)$ and
$U^b_{\Phi}(x) = 0$ if $u^b_{\phi_1}$ contains a unit clause $(\lnot x)$.
Clearly, $u^0_{\phi_1}$ and $u^1_{\phi_1}$ are computable in polynomial time,
and so are $U^0_{\Phi}$ and $U^1_{\Phi}$.

Let $\dom(U^b_{\Phi}) = V_\exists(u^b_{\phi_1})\cup\{x_1\}$ denote
the domain of $U^b_{\Phi}$, i.e.,
variables affected by setting $x_1 = b$.
The basic idea  is now to carefully branch on the first variable: 
if any choice affects a backdoor variable in the \TCNF fragment, 
then we have decreased the backdoor size; 
otherwise, no branching is necessary under certain technical assumptions.

\begin{lemma}\label[lemma]{lem:2cnf-weak-key}
  Let $\Phi = \cQ . (\phi_1 \land \phi_2)$ be a $\QBF(\CNF)$ instance with $\phi_1 \in \TCNF$.
    If $\dom(U^b_{\Phi}) \cap \vars(\phi_2) = \emptyset$ and 
  $\cQ . \phi_1\repl{U^b_{\Phi}} \Leftrightarrow \top$
  for some $b \in \{0, 1\}$
  then $\Phi\repl{U^{\lnot b}_{\Phi}} \Rightarrow \Phi\repl{U^b_{\Phi}}$.
\end{lemma}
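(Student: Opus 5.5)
The plan is to compare winning existential strategies directly. Suppose $\Phi\repl{U^{\lnot b}_{\Phi}}$ is true, and fix a winning existential strategy $\sigma$ for it. From $\sigma$ we will build a winning existential strategy for $\Phi\repl{U^b_{\Phi}}$.

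\textbf{Step 1: a structural fact about $U^b_{\Phi}$.} We first show that every clause of $\phi_1$ containing a variable of $\dom(U^b_{\Phi})$ is satisfied by $U^b_{\Phi}$. The argument uses that $\phi_1 = \cnfprop{\phi_1}$ and the assumption $\cQ . \phi_1\repl{U^b_{\Phi}} \Leftrightarrow \top$.

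Take a clause $(\ell \lor m)$ of $\phi_1$ in which $U^b_{\Phi}$ falsifies $\ell$. Then $\ell$ is the negation of a unit in $\cnfprop{\phi_1 \land (x_1^b)}$, where $x_1^b$ denotes the literal $x_1$ if $b=1$ and $\lnot x_1$ if $b=0$. Resolution therefore puts the unit $(m)$ into $u^b_{\phi_1}$.
\begin{itemize}
\item If $m$ is a literal on a universal variable, then $\phi_1\repl{U^b_{\Phi}}$ retains a universal unit clause. Such a formula is false, contradicting the hypothesis.
\item The same reasoning excludes the case $\cnfprop{\phi_1 \land (x_1^b)} = \{\bot\}$, and also excludes any clash of opposite units.
\item Hence $m$ is existential, lies in $\dom(U^b_{\Phi})$, and is set true by $U^b_{\Phi}$.
\end{itemize}
So every clause touching $\dom(U^b_{\Phi})$ is either satisfied or has all of its falsified literals compensated in this way. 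I expect this to be the main obstacle: it requires careful bookkeeping of what "propagated" guarantees, and of exactly why truth of $\cQ.\phi_1\repl{U^b_{\Phi}}$ rules out universal units and contradictions. This is the only place the second hypothesis is used.

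\textbf{Step 2: the strategy.} Note that $\dom(U^{\lnot b}_{\Phi})$ consists of $x_1$ and existential variables only. So the universal variables of $\Phi\repl{U^b_{\Phi}}$ are exactly the universal variables of $\Phi$ other than $x_1$, and these are also the universal variables of $\Phi\repl{U^{\lnot b}_{\Phi}}$.

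For an existential variable $x \notin \dom(U^b_{\Phi})$:
\begin{itemize}
\item if $x \notin \dom(U^{\lnot b}_{\Phi})$, the existential player answers as $\sigma$ does on the same universal history;
\item if $x \in \dom(U^{\lnot b}_{\Phi})$, the player plays the constant $U^{\lnot b}_{\Phi}(x)$.
\end{itemize}
This strategy respects the prefix order, because $\sigma$ itself does and constants introduce no dependencies.

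\textbf{Step 3: correctness.} Fix any play, and let $\beta$ be the resulting assignment on $\vars(\phi)\setminus\dom(U^b_{\Phi})$. Let $\gamma$ be the full assignment that $\sigma$ produces on the same universal moves, together with $U^{\lnot b}_{\Phi}$. By construction, $\beta$ and $\gamma$ agree on every variable outside $\dom(U^b_{\Phi})$.

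Since $\gamma$ satisfies $\phi$, we check the clauses of $\phi$ under $\beta \cup U^b_{\Phi}$ as follows.
\begin{itemize}
\item Each clause of $\phi_2$ is satisfied, because $\vars(\phi_2)\cap\dom(U^b_{\Phi})=\emptyset$, so $\beta$ agrees with $\gamma$ on all of its variables.
\item Each clause of $\phi_1$ avoiding $\dom(U^b_{\Phi})$ is satisfied for the same reason.
\item Each clause of $\phi_1$ meeting $\dom(U^b_{\Phi})$ is satisfied by $U^b_{\Phi}$ by Step 1.
\end{itemize}
Hence every leaf of the constructed strategy is labeled $\top$, so $\Phi\repl{U^b_{\Phi}}$ is true. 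This gives $\Phi\repl{U^{\lnot b}_{\Phi}} \Rightarrow \Phi\repl{U^b_{\Phi}}$.
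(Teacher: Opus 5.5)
Your proposal is correct and follows essentially the same route as the paper. You use the same hybrid strategy: play $U^{\lnot b}_{\Phi}$ as constants on $\dom(U^{\lnot b}_{\Phi})\setminus\dom(U^b_{\Phi})$ and simulate the winning strategy for $\Phi\repl{U^{\lnot b}_{\Phi}}$ everywhere else. You also rest on the same key fact, that every $\phi_1$-clause meeting $\dom(U^b_{\Phi})$ is satisfied by $U^b_{\Phi}$, and your Step 1 spells out more carefully than the paper why $\cQ.\phi_1\repl{U^b_{\Phi}} \Leftrightarrow \top$ rules out propagated universal units and contradictions.
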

    
\begin{proof}
  Suppose $\Phi\repl{U^{\lnot b}_{\Phi}} \Leftrightarrow \top$,
  i.e., the existential player has a winning strategy $T^{\lnot b}_{\exists}$ on
  $\Phi\repl{U^{\lnot b}_{\Phi}}$.
  It suffices to show that in this case
  the existential player also has a winning strategy 
  $T^{b}_{\exists}$ on $\Phi\repl{U^b_{\Phi}}$.
  We define the strategy as follows.
  Let $\alpha : U^{0}_{\Phi} \cup U^{1}_{\Phi} \to \{0,1\}$
  be an assignment such that $\alpha(x) = U^{b}_{\Phi}(x)$
  if $x \in \dom(U^{b}_{\Phi})$ and $\alpha(x) = U^{\lnot b}_{\Phi}(x)$ otherwise,
  i.e., $\alpha$ defaults to the assignment 
  $U^{b}_{\Phi}$ and uses $U^{\lnot b}_{\Phi}$ otherwise.
  Then the existential player will follow $\alpha$
  on the variables in $\dom(\alpha)$ and reuse $T^{\lnot b}_{\exists}$
  on the remaining existential variables.
  
  Formally, the strategy $T^{b}_{\exists}$ is defined as follows.
  Consider an existential variable $x$ in the node $t$ of $T^{b}_{\exists}$.
  If $x \in \dom(\alpha)$, then choose $x = \alpha(x)$.
  Otherwise, let $\tau_t$ be the root-to-$t$ assignment in $T^{b}_{\exists}$.
  Since $\Phi\repl{U^{b}_{\Phi}}$ and $\Phi\repl{U^{\lnot b}_{\Phi}}$
  have the same set of universal variables appearing in the same order
  in the quantifier prefix,
  there is a node $t'$ in $T^{\lnot b}_{\exists}$ labeled with $x$
  such that the root-to-$t'$ assignment $\tau'_{t'}$ in $T^{\lnot b}_{\exists}$
  coincides with $\tau_{t}$ on all universal variables.
  There may be multiple such nodes $t'$, so pick an arbitrary one. 
  Assign $x = \tau'_{t'}(x)$.

  We claim that $T^{b}_{\exists}$ is a winning strategy.
  Consider a root-to-leaf assignment $\tau$.
  It suffices to show that $\tau$ satisfies every clause $c$ 
  in $\phi_1 \land \phi_2$ that is not satisfied by $U^b_{\Phi}$.
  We claim that no such clause $c$ contains a variable from $\dom(U^b_{\Phi})$.
  Indeed, if $c \in \phi_1$, then 
  $U^b_{\Phi}$ neither satisfies nor falsifies $c$
  because $\cQ. \phi_1[U^b_{\Phi}] \Leftrightarrow \top$.
  Moreover, by definition of unit propagation,
  $U^b_{\Phi}$ cannot falsify a literal in $c$
  without satisfying the other literal,
  hence it does not assign a value to either.
  Furthermore, if $c \in \phi_2$,
  then the claim holds because $\dom(U^b_{\Phi})$
  is disjoint from $V(\phi_2)$.

  By construction of $T^{b}_{\exists}$, there exists
  a root-to-leaf assignment $\tau$ in $T^{\lnot b}_{\exists})$
  such that $\tau(x) = \alpha(x)$ for all $x \in \dom(\alpha)$ and
  $\tau(x) = \tau'(x)$ otherwise.
  Moreover, every clause in $\phi_1 \land \phi_2$
  is satisfied by $U^{\lnot b}_{\Phi}$ or $\tau'$
  because $T^{\lnot b}_{\Phi}$ is a winning strategy.
  If $c$ is satisfied by $U^{\lnot b}_{\Phi}$,
  then $U^{\lnot b}_{\Phi}$ must satisfy a literal in $c$
  over a variable
  $x \in \dom(U^{\lnot b}_{\Phi}) \setminus \dom(U^{b}_{\Phi})$,
  and $\tau(x) = \alpha(x) = U^{\lnot b}_{\Phi}$, hence
  $\tau$ satisfies $c$.
  Otherwise, $c$ is not satisfied by either 
  $U^{\lnot b}_{\Phi}$ or $U^{b}_{\Phi}$
  and is satisfied by $\tau'$.
  Then $c$ does not contains any variables from $\dom(\alpha)$,
  and $\tau$ agrees with $\tau'$ on all variables in $c$,
  implying that $\tau$ satisfies $c$.
\end{proof}

We obtain the following corollary depending on the quantifier of the outermost variable.
    
\begin{corollary}\label[corollary]{cor:can_branch}
    Let $\Phi = Q_1 x_1 \ldots Q_n x_n . (\phi_1 \land \phi_2)$ be a $\QBF(\CNF)$ instance with a CC-backdoor $V(\phi_2)$ into $\TCNF$. Let $b \in \{0, 1\}$.
    If $\dom(U^b_{\Phi}) \cap \vars(\phi_2) = \emptyset$ and $\phi_1\repl{U^b_{\Phi}} \Leftrightarrow \top$, then 
    \begin{itemize}
        \item $\Phi \Leftrightarrow \Phi\repl{U^b_{\Phi}}$ if $Q_1 = \exists$, and
        \item $\Phi \Leftrightarrow \Phi\repl{U^{\lnot b}_{\Phi}}$ if $Q_1 = \forall$.
    \end{itemize}
\end{corollary}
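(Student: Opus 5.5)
The plan is to derive the corollary directly from \Cref{lem:2cnf-weak-key}, combined with the recursive definition of QBF truth at the outermost variable and the fact that unit propagation preserves equivalence. First I will establish that for each $b' \in \{0,1\}$,
\[ \Phi\repl{x_1 = b'} \Leftrightarrow \Phi\repl{U^{b'}_{\Phi}}. \]
Every variable in $\dom(U^{b'}_{\Phi}) \setminus \{x_1\}$ is existential, and its value is forced by a unit clause of $\cnfprop{\phi_1 \land (x_1 = b')}$, which is logically implied by $\phi_1\repl{x_1=b'}$. So in any assignment satisfying the matrix, such a variable must take the value prescribed by $U^{b'}_{\Phi}$. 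Consequently, any winning existential strategy for $\Phi\repl{x_1=b'}$ already sets these variables this way on every branch that ends in $\top$. Conversely, a winning strategy for $\Phi\repl{U^{b'}_\Phi}$ extends to one for $\Phi\repl{x_1=b'}$ by playing the forced values at those nodes. This step is routine, but it requires the observation that forced existential literals may be fixed in advance regardless of their position in the prefix.

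The hypothesis is stated as $\phi_1\repl{U^b_\Phi}\Leftrightarrow\top$, whereas the lemma requires $\cQ.\phi_1\repl{U^b_\Phi}\Leftrightarrow\top$. I will read the hypothesis in the quantified sense. If it is meant propositionally, the quantified statement follows, since a true formula stays true under any quantification. With this reading, \Cref{lem:2cnf-weak-key} gives $\Phi\repl{U^{\lnot b}_\Phi} \Rightarrow \Phi\repl{U^b_\Phi}$.

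Now I split on $Q_1$. If $Q_1=\exists$, then $\Phi \Leftrightarrow \Phi\repl{x_1=0} \lor \Phi\repl{x_1=1} \Leftrightarrow \Phi\repl{U^b_\Phi}\lor\Phi\repl{U^{\lnot b}_\Phi}$. By the lemma the second disjunct implies the first, so the disjunction collapses to $\Phi\repl{U^b_\Phi}$. If $Q_1=\forall$, the same computation with a conjunction gives $\Phi\repl{U^b_\Phi}\land\Phi\repl{U^{\lnot b}_\Phi}$. Because $\Phi\repl{U^{\lnot b}_\Phi}$ implies the other conjunct, the conjunction collapses to $\Phi\repl{U^{\lnot b}_\Phi}$.

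I expect the main obstacle to be the first step, namely justifying $\Phi\repl{x_1=b'} \Leftrightarrow \Phi\repl{U^{b'}_\Phi}$ rigorously, together with a subtle case. If propagation of $x_1=b'$ forces a literal on a universal variable, or derives $\bot$, then $U^{b'}_\Phi$ does not capture the full effect. For $b'=b$ this cannot happen, since $\phi_1\repl{U^b_\Phi}$ is true. For $b'=\lnot b$, either $\phi_1\repl{x_1=\lnot b}$ is unsatisfiable, or it forces a universal literal, which makes the quantified formula false. In both cases $\Phi\repl{x_1=\lnot b}$ is false. The $\exists$ case then still follows. In the $\forall$ case, I will treat $\Phi\repl{U^{\lnot b}_\Phi}$ as false as well, arguing that the definition of $\Phi\repl{U^{\lnot b}_\Phi}$ keeps the remaining clauses of $\phi_1\repl{x_1=\lnot b}$ and hence inherits the contradiction.
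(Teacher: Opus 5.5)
Your proposal is correct and follows the route the paper intends. The paper states the corollary without proof, as an immediate consequence of Lemma~\ref{lem:2cnf-weak-key} and the semantics of $Q_1 x_1$, and you make explicit the step $\Phi\repl{x_1=b'} \Leftrightarrow \Phi\repl{U^{b'}_\Phi}$ that the paper leaves implicit. Your closing case analysis is harmless but unnecessary: that equivalence only uses that the literals of $U^{b'}_\Phi$ (other than $x_1$) are on existential variables and implied by the matrix, so it already holds when propagation also forces a universal literal or derives $\bot$.
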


 We solve $\Phi$ by greedily assuming an assignment to the first unassigned variable in the quantifier prefix, and use one step look-ahead combined with unit propagation.
    If an assignment directly fails, or affects the backdoor, we try the other assignment.
    If both assignments affect the backdoor variables we branch on both of them but in this case successfully reduce $k$.
    If both assignments fail, we simply answer no. For each variable in the quantifier prefix we thus use only polynomial time to do propagation, and either do not branch at all, or branch while simultaneously decreasing the backdoor size.

\begin{theorem} \label{the:2cnf_fpt}
  $\QBF(\CNF)$ is solvable in $\bigoh^*(2^k)$ time,
  where $k$ is the size of CC-backdoor to $\TCNF$.
\end{theorem}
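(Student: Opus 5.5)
We prove the theorem with a recursive procedure $\mathrm{Solve}(\Phi)$ on instances $\Phi = Q_1x_1\ldots Q_nx_n.(\phi_1\land\phi_2)$, where $\phi_1\in\TCNF$ and $k=|V(\phi_2)|$. We show two things: the procedure is correct, and its recursion tree has at most $2^k$ leaves, with polynomial work along every root-to-leaf path.

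\textbf{Preprocessing.} Each call first replaces $\phi_1$ by $\cnfprop{\phi_1}$ using \Cref{prop:prop_comp}. It then removes unit clauses by assigning the forced variables. A forced universal variable, or an empty clause, means the answer is false. After each assignment we re-propagate. Clauses of $\phi_2$ that become binary may be moved into $\phi_1$, which can only shrink $V(\phi_2)$. If the matrix is empty, we answer true.

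\textbf{Branching rule.} Otherwise we compute $U^0_\Phi$ and $U^1_\Phi$. Call $b$ \emph{good} if $\dom(U^b_\Phi)\cap V(\phi_2)=\emptyset$ and $\phi_1[U^b_\Phi]$ is not refuted, i.e.\ propagation does not derive the empty clause. For propagated $\TCNF$, non-refutation coincides with the condition $\phi_1[U^b_\Phi]\Leftrightarrow\top$ needed in \Cref{cor:can_branch}; we must check that the existential closure is meant there.

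If some $b$ is good, \Cref{cor:can_branch} replaces $\Phi$ by a single subinstance:
\begin{itemize}
\item $\Phi[U^b_\Phi]$ when $Q_1=\exists$;
\item $\Phi[U^{\lnot b}_\Phi]$ when $Q_1=\forall$.
\end{itemize}
If $b=1$ is the good value, applying the corollary also needs the symmetric version of \Cref{lem:2cnf-weak-key}. In the $\forall$ case, if $U^{\lnot b}_\Phi$ refutes $\phi_1$, we answer false. This step makes no branch and $k$ does not increase.

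If neither value is good, then for each $b$ either $\phi_1[U^b_\Phi]$ is refuted or $U^b_\Phi$ touches $V(\phi_2)$. A refuted value is dropped: for $\exists$ it simply fails, and for $\forall$ it makes the answer false. The remaining values are recursed on, combined with OR for $\exists$ and AND for $\forall$. Each surviving branch assigns at least one variable of $\phi_2$, so the parameter strictly decreases in that branch. Assigning a backdoor variable removes it from $V(\phi_2)$, so this measure is the right one.

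\textbf{Running time.} Let $T(k)$ be the number of leaves. Then $T(k)\le 2T(k-1)$ for branching steps, and non-branching steps only remove variables. Since each call removes $x_1$, every path has length at most $n$. This gives $2^k$ leaves and $\bigoh(2^k n\cdot n^3)$ total time. When $k=0$ the instance is $\QBF(\TCNF)$; either solve it in polynomial time directly or let the same rule finish it without branching.

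\textbf{Main obstacle.} The delicate part is correctness of the non-branching step. Two things must line up exactly:
\begin{itemize}
\item The domain of $U^b_\Phi$ only includes existential variables. Propagation that forces a universal variable signals falsity; the preprocessing must handle this before the corollary is applied, and it should be stated explicitly.
\item The hypothesis $\phi_1[U^b_\Phi]\Leftrightarrow\top$ must be read as satisfiability, equivalently non-refutation after propagation. This is precisely what the proof of \Cref{lem:2cnf-weak-key} uses, namely that clauses of $\phi_1$ touched by $U^b_\Phi$ are satisfied.
\end{itemize}
The proof will spell out this invariant.
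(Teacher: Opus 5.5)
There is a genuine gap in your non-branching step: your notion of a \emph{good} value is too weak, and you propose to catch forced universal variables too late. Lemma~\ref{lem:2cnf-weak-key}, and the paper's proof of the theorem, use the quantified hypothesis $\cQ.\phi_1[U^b_\Phi]\Leftrightarrow\top$. They do not use mere satisfiability (non-refutation) of $\phi_1[U^b_\Phi]$, and under the satisfiability reading the lemma is false.

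Take $\Phi=\exists x_1\forall u.(x_1\lor u)$ with $\phi_2$ empty. For $b=0$, propagation derives the unit $(u)$ on a universal variable. So $\dom(U^0_\Phi)=\{x_1\}$ is disjoint from $V(\phi_2)$, and $\phi_1[U^0_\Phi]=(u)$ is satisfiable, hence not refuted. Thus $b=0$ is good in your sense, and so is $b=1$. If your rule picks $b=0$, it continues with $\forall u.(u)$ and answers false, although $\Phi$ is true via $x_1=1$. The instance $\exists x_1\forall u.(\lnot x_1\lor u)$ defeats the opposite tie-breaking. The $\forall$ case fails the same way: in $\forall x_1\forall u.(x_1\lor u)$, the value $b=0$ is ``good'', and your rule continues with the true formula $\Phi[U^1_\Phi]=\forall u.\top$, although $\Phi$ is false.

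Your remark that forced universals must be handled by the preprocessing does not repair this. The preprocessing at the start of a call only sees unit clauses of the current matrix. The offending unit clause appears only in $\phi_1[U^b_\Phi]$, i.e.\ after the corollary has already committed to a subinstance.

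The repair is small. Call $b$ good only if, in addition, $u^b_{\phi_1}$ contains no unit clause on a universal variable other than $x_1$. Drop any such $b$ exactly like a refuted value: since $\phi_1$ is propagated, it then contains a binary clause that becomes a universal unit under $x_1=b$, so $\Phi[x_1=b]$ is false. Alternatively, simply test $\cQ.\phi_1[U^b_\Phi]\Leftrightarrow\top$ as the paper does, which takes polynomial time because \QBF(\TCNF) is in P. Non-refutation together with ``no forced universal'' is exactly what guarantees that every clause of $\phi_1$ meeting $\dom(U^b_\Phi)$ is satisfied by $U^b_\Phi$. That is the only place the lemma's proof uses its hypothesis.

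With this change your procedure is essentially the paper's proof. You use the same lemma and corollary for the non-branching step, branch only when both surviving values touch $V(\phi_2)$, and use $|V(\phi_2)|$ as the measure, giving $2^k$ leaves with polynomial work per node. Incidentally, Lemma~\ref{lem:2cnf-weak-key} is already stated for arbitrary $b$, so no separate symmetric version is needed.
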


    \begin{proof}
        Let $\Phi = \cQ. (\phi_1 \land \phi_2)$, where 
        $\phi_1 \in \TCNF$ and $k = |\vars(\phi_2)|$.
        Let $\cQ= Q_1 x_1 \ldots Q_n x_n$. We decide whether to branch on $x_1$ as follows. 
        \begin{enumerate}
            \item 
            If $\pref . \phi_1[U^1_{\Phi}]\Leftrightarrow\bot$ and $\pref . \phi_1[U^0_{\Phi}]\Leftrightarrow\bot$, then reject.
            \item 
            If $\pref . \phi_1[U^b_{\Phi}]\Leftrightarrow\top$ and $\pref . \phi_1[U^{\lnot b}_{\Phi}]\Leftrightarrow\bot$ for some $b \in \{0, 1\}$ then there is no need to branch and we continue with $\Phi[U^{b}_{\Phi}]$ if $Q_1 = \exists$ and otherwise reject.
            \item 
            If $\pref . \phi_1[U^b_{\Phi}]\Leftrightarrow\top$, $\pref . \phi_1[U^{\lnot b}_{\Phi}]\Leftrightarrow\top$ and $\dom(U^b_{\Phi}) \cap \vars(\phi_2) = \emptyset$ for some $b \in \{0, 1\}$ then we use the assignment prescribed by Corollary~\ref{cor:can_branch}.
        \end{enumerate}
        Otherwise, $\dom(U^1_{\Phi}) \cap \vars(\phi_2) \neq \emptyset$ and
        $\dom(U^0_{\Phi}) \cap \vars(\phi_2) \neq \emptyset$, so regardless of whether we branch on $x_1 = 1$ or $x_1 = 0$, the parameter $k$ decreases by at least one. Hence, if we repeat this for every  outermost variable then we in each iteration either do not branch at all, or branch on two possible values where we in each branch shrink the parameter. The complexity is thus bounded by $2^k$ and a polynomial factor, which is dominated by computing $\cnfprop{\phi_1}$ (recall Proposition~\ref{prop:prop_comp}) for every variable in the quantifier prefix, via  $U^b_{\Phi}$.%

        For correctness, step $1$ and $2$ are trivial, $3$ follows by \Cref{lem:2cnf-weak-key} and \Cref{cor:can_branch}, and otherwise we are branching and hence inherit correctness directly from the definition of evaluating QBFs.
        As these four cases are exhaustive, correctness for the full algorithm follows directly.
    \end{proof}

\begin{example} \label{ex:branching_2cnf}
We illustrate the algorithm of \Cref{the:2cnf_fpt} on the formula $\Phi$ of Example~\ref{ex:tcnf_backdoor}. The outermost variable $x_1$ is  existentially quantified. Setting $x_1=0$ propagates $(x_1 \lor x_3)$ to the unit clause $(x_3)$, hence $\dom(U^0_\Phi)=\{x_1,x_3\}$ with $U^0_\Phi(x_1)=0$ and $U^0_\Phi(x_3)=1$. Symmetrically, $x_1=1$ propagates $(\neg x_1 \lor x_4)$ to $(x_4)$, hence $\dom(U^1_\Phi)=\{x_1,x_4\}$. Both $\dom(U^0_\Phi)$ and $\dom(U^1_\Phi)$ intersect $V(\phi_2)$, so we branch on $x_1$, but in each branch the parameter $k$ strictly decreases. For instance, on the branch $x_1=0,\,x_3=1$, the clause $\phi_2$ simplifies to $(\neg x_4 \lor \neg x_5) \in \TCNF$ and the formula
\[
  \forall x_2\, \exists x_4\, \exists x_5\,.\,(x_2 \lor x_5) \land (\neg x_4 \lor \neg x_5)
\]
is in \TCNF and can be solved in polynomial time.
The branch $x_1=1$ is symmetric, so $\Phi$ evaluates to true.
\end{example}

This running time is optimal under the SETH since an asymptotically faster algorithm immediately implies a faster algorithm for $\QBF(\CNF)$ by letting the backdoor cover the entire instance.

\section{Backdoors into Affine} \label{sec:cc_affine}

We proceed by analyzing the affine case which is notably different
from \TCNF.
The main idea is to
simultaneously carefully manipulate which variables only occur in a single row while
computing a kernel via repeated applications of Gaussian
elimination. 

Let $\phi = \{(A_1, b_1), \dots, (A_m, b_m)\}$ 
be a system of affine constraints.
We define basic linear-algebraic operations
which will be used throughout the algorithm.
Suppose $A_i \neq \emptyset$ and $x \in A_i$.
By \emph{pivoting on $x$ and $i$} we denote
the operation of removing $x$ from every 
equation except $(A_i, b_i)$ and
substituting its value according to $(A_i, b_i)$.
Formally,
\[
\begin{aligned}
\pivot & (\phi,x,i)
  ={} \{(A_j,b_j)\mid j\in[m],\, x\notin A_j\} \\
  &\cup \{(A_j\triangle A_i,\, b_j\xor b_i)\mid j\in[m],\, j\neq i,\, x\in A_j\} \\
  &\cup \{(A_i,b_i)\}.
\end{aligned}
\]
where $\triangle$ denotes the symmetric difference of two sets.
We say that a variable is \emph{private} if it only occurs in one equation.
Observe that $x$ is a private variable in $\pivot(\phi, x, i)$.
It is also easy to see that
for every quantifier prefix $\cQ$ on $V(\phi)$,
we have $\cQ. \phi \Leftrightarrow \cQ. \pivot(\phi, x, i)$
because $\phi$ and $\pivot(\phi, x, i)$ have the same
set of satisfying assignments.
Moreover, if $x$ is the innermost variable in $A_i$
and it is existential, we can now \emph{eliminate} it from the system
without affecting the truth value of the formula.
We define $\elim(\phi,x,i) = \pivot(\phi, x, i) \setminus (A_i,b_i)$
and note that if $x$ is innermost in $A_i$ and existential, then
for all quantifier prefixes $\cQ$, we have
$\cQ. \phi \Leftrightarrow \cQ'. \elim(\phi,x,i)$,
where $\cQ'$ is obtained by removing $x$ from $\cQ$.
Indeed, since $x$ is the innermost variable in $(A_i, b_i)$
and private,
for any partial assignment $\alpha : A_i \setminus \{x\} \to \{0,1\}$,
the existential player can set $x$ to
$\bigoplus_{y \in A_i \setminus \{x\}}\alpha(y) \oplus b_i$
to satisfy $(A_i,b_i)$ without affecting any other equation.
    
We are now ready to introduce our first step towards a kernel.
Given a $\CNFAFF$, we will effectively remove
every equation from the \AFF part where
the innermost variable does not occur in the \CNF part.
If the backdoor set is of size $k$,
this already yields a formula with at most $k$
equations in the \AFF part.
With an additional step we ensure that every such formula
contains at most one non-backdoor variable.

\begin{definition} \label{def:kernel}
  Let $\cQ. \phi$ be in $\AFF$ and 
  $X \subseteq V(\phi)$ be a subset of variables.
  Define $\kernel(\cQ, \phi, X)$ as follows:
  \begin{enumerate}
    \item While there is an equation $(A_i,b_i)\in\phi$
      with innermost variable $x$ such that
      $x \notin X$, eliminate $x$, i.e.,
      let $\phi := \elim(\phi, x, i)$.
    \item While there is an equation $(A_i, b_i)\in\phi$
      such that $A_i \setminus X \neq \emptyset$ and
      the innermost variable in $A_i \setminus X$
      is not private, pivot on said variable, i.e.,
      let $\phi := \pivot(\phi, x, i)$.
    \item For every equation $(A, b)$ in $\phi$
      with $A \setminus X \neq \emptyset$, 
      remove every variable of $A \setminus X$ 
      except the innermost.
  \end{enumerate}
\end{definition}

Now we verify correctness of the reduction.

\begin{lemma}\label{lem:kernel}
  Let $\cQ. (\phi_1 \land \phi_2)$ be in $\CNFAFF$
  with $\phi_1 \in \AFF$ and $X = V(\phi_2)$.
  If $\cQ. \phi_1 \Leftrightarrow \top$,
  then $\cQ. (\phi_1 \land \phi_2) \Leftrightarrow \cQ. (\kernel(\cQ, \phi_1, X) \land \phi_2)$.
\end{lemma}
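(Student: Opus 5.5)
We prove the lemma step by step, following the three phases of \Cref{def:kernel}. We show that each elementary operation preserves the truth value of $\cQ.(\phi_1 \land \phi_2)$, where $\phi_1$ is the current affine part. Note that $\cQ$ shrinks when variables are eliminated, so ``$\cQ$'' should be read as the current prefix. The hypothesis $\cQ.\phi_1 \Leftrightarrow \top$ is only needed to rule out trivially inconsistent systems. For example, it excludes equations $(\emptyset,1)$, and it excludes an equation whose variables are all universal. If we instead ignored such cases, eliminating the constraint could turn a false instance true. Also, $\cQ.\phi_1 \Leftrightarrow \top$ is preserved by each step, by the same arguments applied with $\phi_2$ empty.

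\textbf{Phases 1 and 2.} Pivoting replaces $\phi_1$ by an equivalent system with the same solution set. Hence $\phi_1\land\phi_2$ has the same satisfying assignments, and the QBF value is unchanged for any prefix. This handles phase 2.

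For phase 1, let $x$ be the innermost variable of $A_i$, with $x\notin X$. First, $x$ must be existential. If it were universal, then for every assignment of the other variables of $A_i$, which are all quantified before $x$, the universal player could choose $x$ to falsify $(A_i,b_i)$. This would contradict $\cQ.\phi_1\Leftrightarrow\top$.

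After pivoting, $x$ is private in $\pivot(\phi_1,x,i)$. It also does not occur in $\phi_2$, since $x\notin X$. So $x$ occurs only in $(A_i,b_i)$, and all other variables of that equation precede it. Given any existential strategy for $\cQ'.(\elim(\phi_1,x,i)\land\phi_2)$, we extend it by setting $x$ to the parity value that satisfies $(A_i,b_i)$. This gives a winning strategy for the original formula. Conversely, dropping a constraint only weakens the formula, so the reduced formula is implied. This is exactly the argument given before \Cref{def:kernel}, now applied with the CNF part present; it works because $x$ is absent from $\phi_2$.

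\textbf{Phase 3.} This is the main obstacle. After phase 2, every equation $(A,b)$ with $A\setminus X\neq\emptyset$ has its innermost non-$X$ variable $y$ private. By phase 1, $y$ is not the innermost variable of $A$ overall, since otherwise it would have been eliminated. Hence the variables of $A$ quantified after $y$ all lie in $X$.

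We claim that the other non-$X$ variables $z\in A\setminus X$, $z\neq y$, can be deleted from the equation. Let $\beta$ denote the parity of these $z$'s.

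First, $y$ must be existential. Otherwise the universal player could falsify the equation in the reduced instance, and consistency would be violated. We argue using the requirement that $\cQ.\phi_1$ is true, in which case every such $y$ is existential.

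Given a winning strategy for one instance, we translate it to the other by re-encoding $y$. In the original instance, the existential player sets $y' = y \oplus \beta$. Here $y$ is the value prescribed by the strategy for the reduced instance, and $\beta$ is already known when $y$ is played, because every such $z$ precedes $y$. Conversely, in the reduced instance the player sets $y = y' \oplus \beta$.

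This re-encoding is legitimate because $y$ is private and does not occur in $\phi_2$. So only this one equation sees $y$, and the translation preserves satisfaction of every constraint. The map is a bijection on plays that agrees on all universal moves, so winning strategies transfer in both directions.

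The care needed is in checking that the $z$'s remain in the other equations, where they do not interfere. Phase 3 does not modify those other equations, so this holds. We also need to check that $y$ being private persists through the sequence of phase-3 deletions; it does, since deletions never add occurrences of a variable.

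Composing the three phases proves the lemma.
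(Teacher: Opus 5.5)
Your phases 1 and 2 match the paper's argument. Your phase-3 idea is also the paper's: re-encode the private variable $y$ by XOR with the parity $\beta$ of the deleted variables, all of which precede $y$. But phase 3 rests on the claim that $y$ must be existential, and this is false. Take $\cQ=\exists z\,\forall y\,\exists w$, $\phi_1=\{(\{z,y,w\},0)\}$ and $\phi_2=(w)$, so $X=\{w\}$. Then:
\begin{itemize}
\item $\cQ.\phi_1$ is true, and phases 1 and 2 do nothing;
\item $y$ is the innermost variable of $A\setminus X$ and is private;
\item phase 3 deletes $z$, yet $y$ is universal.
\end{itemize}
Your justification breaks because the reduced equation $(\{y,w\},0)$ still contains the $X$-variable $w$, quantified after $y$, which can repair the parity. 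The universal player can only force a violation through the innermost variable of an equation overall. So your translation (``the existential player sets $y'=y\oplus\beta$'', ``a bijection on plays that agrees on all universal moves'') leaves the universal case uncovered.

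The fix is how the paper phrases it (``the player that controls $x$''). Consider the map that changes only $y$, via $\tau(y)\mapsto\tau(y)\oplus\beta(\tau)$. It is an involution, and it is computable online because $\beta$ depends only on variables quantified before $y$. Since $y$ is private and $y\notin X$, it maps the satisfying assignments of the matrix before the deletion bijectively onto those after the deletion. Hence whichever player controls $y$ can transfer a winning strategy from one game to the other. If $y$ is universal, the existential player simply simulates the other game, feeding it $y\oplus\beta$ in place of the opponent's move. Phase 3 therefore needs no assumption on the quantifier of $y$, and no use of $\cQ.\phi_1\Leftrightarrow\top$.

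A smaller point: your remark that ``by phase 1, $y$ is not the innermost variable of $A$'' is also false, because phase-2 pivots can create equations whose innermost variable lies outside $X$. For example, with $X=\{w\}$ and $w$ quantified last, take the system $\{x,w\},\{x,u,w\}$. Phase 1 leaves both equations alone, and phase 2 pivots on the non-private $x$, turning the second equation into $\{u\}$. The slip is harmless: that all variables of $A$ quantified after $y$ lie in $X$ follows from the definition of $y$ alone.
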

\begin{proof}
  First, observe that $\cQ. \phi_1 \Leftrightarrow \top$
  implies that the innermost variable in every equation
  is always existential, otherwise the universal player
  can falsify an equation solely using the innermost variable.
  Consider the first step in Definition~\ref{def:kernel}.
  If $x$ is the innermost variable in an equation $(A_i,b_i) \in \phi_1$,
  then $x$ is existential and private in $\pivot(\phi_1,x,i)$.
  Moreover, if $x \notin X$, then the value assigned to it
  does not affect $\phi_2$, so the existential player
  can use it to satisfy $(A_i,b_i)$, hence
  the equation can be safely removed.
  
  The second step in Definition~\ref{def:kernel} is pivoting,
  which preserves the set of satisfying assignments.
  To verify correctness of the third step, 
  let $\phi'_1$ be the formula obtained after steps 1 and 2.
  Consider an equation
  $(A, b) \in \phi'_1$ with $A \setminus X \neq \emptyset$.
  Let $x$ be the innermost variable in $A \setminus X$.
  Note that by the second step, $x$ is private.
  Let $A' = (A \cap X) \cup \{x\}$ and
  let $\phi''_1$ be the formula
  obtained by replacing $(A, b)$ with $(A', b)$.
  We claim that $\cQ. (\phi'_1 \land \phi_2) \Leftrightarrow \cQ. (\phi''_1 \land \phi_2)$,
  more precisely, the player that controls $x$ wins on the left-hand side
  formula if and only if they win on the right-hand side formula.
  Indeed, to obtain a strategy for one formula,
  a player that controls $x$ can modify their strategy 
  for the other formula by replacing the value of $x$ with
  $\bigoplus_{y \in A \setminus X} \tau(y)$,
  where $\tau$ is the partial assignment up to and including $x$.
  Since $x$ is a private variable, this choice
  does not affect any other equation, and since
  $x \notin X$, it does not affect $\phi_2$.
\end{proof}

We additionally want to prove that Definition~\ref{def:kernel} lets us construct an actual kernel in polynomial time.

\begin{lemma} \label{lem:cnfaff_kernel}
  There is a polynomial-time algorithm
  that takes a formula $\cQ. (\phi_1 \land \phi_2)$
  in $\CNFAFF$ with $\phi_1 \in \AFF$ and
  $|V(\phi_2)| = k$, and returns
  an instance of the same problem
  on at most $2k$ variables.
\end{lemma}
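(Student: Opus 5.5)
The algorithm first tests whether $\cQ.\phi_1 \Leftrightarrow \top$. This is polynomial, since $\QBF(\AFF)$ is tractable: repeatedly eliminate the innermost variable of an equation, rejecting if that variable is universal, and reject if an equation $(\emptyset,1)$ appears. If the test fails, then $\cQ.(\phi_1\land\phi_2)$ is false, and we output a fixed trivially false instance of constant size, for example $\exists x.\,(\{x\},0)\land(\{x\},1)$. Otherwise we output $\cQ'.(\kernel(\cQ,\phi_1,X)\land\phi_2)$, where $X=V(\phi_2)$ and $\cQ'$ is $\cQ$ restricted to the surviving variables; variables that no longer occur can simply be dropped from the prefix. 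Equivalence with the input is exactly Lemma~\ref{lem:kernel}.

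Next we bound the size. After step~1 of Definition~\ref{def:kernel}, every remaining equation has its innermost variable in $X$. The pivoting in step~2 keeps this property, and steps~2--3 only ever remove or add variables outside $X$, which lie further out. I would then argue that the equations have distinct innermost variables. After step~1 we may apply Gaussian elimination within $X$, pivoting on the innermost variable of each equation in turn. By the same reasoning as in step~1, this makes that variable private and distinct across equations, since two equations sharing an innermost variable can be combined by pivoting. Rather than this re-pivoting, I would preferably fold it into the definition via a short claim: if $x$ is innermost in $(A_i,b_i)$, then after $\pivot(\phi,x,i)$ no other equation contains $x$, so every other equation has a different innermost variable. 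Either way we get at most $|X|=k$ equations.

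Step~3 leaves each equation with at most one variable outside $X$. Hence the total number of variables is at most $k+k=2k$ ($X$ plus one private non-$X$ variable per equation). An alternative count with the same bound is that the non-backdoor variables are private and in bijection with a subset of the equations.

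Running time: every elimination removes a variable, and each pivot in step~2 makes a variable private. I must argue that step~2 terminates in polynomially many steps, which is the main obstacle. Order the non-$X$ variables from innermost outward. Pivoting on the innermost non-$X$ variable $x$ of $A_i$ only adds to other rows variables of $A_i$; these are outer to $x$ or lie in $X$. An innermost non-$X$ private variable then stays private, because later pivots touch only rows containing the pivot variable, and $x$ is in a single row. I would therefore process rows by potential (the position of the innermost non-$X$ variable) and show that each variable is pivoted on at most once, giving $O(n)$ pivots of polynomial cost. If this ordering argument becomes delicate, the fallback is to perform step~2 as ordinary Gaussian elimination on the non-$X$ columns in order from innermost outward, which is clearly polynomial and yields the same invariants.
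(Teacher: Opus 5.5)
Your skeleton is the paper's. You test whether $\cQ.\phi_1$ is true and output a trivial no-instance otherwise; else you output $\kernel(\cQ,\phi_1,X)\land\phi_2$ with the restricted prefix, appeal to Lemma~\ref{lem:kernel}, and count at most $k$ equations with at most one non-$X$ variable each.

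The paper simply asserts the bound of $k$ equations. You are right that it needs an argument, and in fact it does not follow from Definition~\ref{def:kernel} as stated. Take the prefix $\exists y_1\cdots\exists y_m\exists x$, the equations $(\{y_j,x\},0)$ for $j\in[m]$, and $X=\{x\}$. The innermost variable of every equation is $x\in X$ and every $y_j$ is private, so none of the three steps fires and $m+1$ variables remain. Your extra elimination on innermost variables is therefore a genuinely needed addition. Your termination argument for step~2 also supplies something the paper leaves implicit: a pivoted variable stays private and innermost among the non-$X$ variables of its row, so each variable is pivoted at most once.

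As written, however, your count has a hole. Pivoting on innermost variables does not keep them inside $X$. In the example, pivoting on $x$ in the first equation turns the others into $(\{y_1,y_j\},0)$, whose innermost variable $y_j$ lies outside $X$. Distinct innermost variables therefore only bound the number of equations whose innermost variable is in $X$. You must run step~1 again to delete the other rows; this is safe because their innermost variable is private, outside $X$, and existential (truth of $\cQ.\phi_1$ is preserved by pivoting and elimination).

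Your claim that step~2 preserves ``innermost variable in $X$'' and only adds or removes non-$X$ variables is also false. A pivot adds the whole row $A_i$, including its $X$-part. With $y'$ before $y$ before $x$, the rows $\{y,x\}$ and $\{y',y,x\}$ survive step~1, and step~2 pivots on $y$ and turns one of them into $\{y'\}$. That claim is not needed, though. Do the counting right after the sequence ``step~1, reduction to echelon form with innermost variables as leading variables, step~1 again''. At that point at most $k$ rows remain, with distinct leading variables in $X$. Then observe that pivoting and step~3 never increase the number of equations.

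Finally, since you have changed the procedure, equivalence is no longer literally Lemma~\ref{lem:kernel}. Add the easy remark that the extra pivots preserve the set of satisfying assignments, and that the extra eliminations are justified exactly as in step~1.
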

\begin{proof}
First, check if $\cQ. \phi_1 \Leftrightarrow \top$,
  and if not, output a trivially false formula,
  e.g. $(\bot)$.
  Otherwise,
  let $X = V(\phi_2)$ and let
  $\phi'_1 = \kernel(\cQ, \phi_1, X)$.
  As everything needed to do this is pivoting, 
  substituting and removing variables,  
  it can be constructed in polynomial time.
  Observe that there are at most $k$
  equations in $\phi'_1$, each of them
  containing at most $k + 1$ variables,
  at most one of which is from $V(\phi) \setminus X$.
  Hence, $\phi'_1 \land \phi_2$ contains at most $2k$ variables and we can construct a new prefix $\cQ'$ by taking $\cQ$ and removing all variables not in $\phi'_1 \land \phi_2$.
  As by \Cref{lem:kernel} we have $\cQ. (\phi_1 \land \phi_2) \Leftrightarrow \cQ'. (\phi'_1 \land \phi_2)$.
\end{proof}

Observe that the formula we obtain in~\Cref{lem:cnfaff_kernel}
contains at most $k$ equations and 
an arbitrary \CNF on $k$ variables,
which can have $\bigoh(3^k)$ clauses.
The size of the formula
cannot be reduced to polynomial in $k$
unless coNP is included in NP/poly (which implies that the polynomial hierarchy collapses)~\cite{10.1145/2629620}.

For the purposes of an $\mathrm{FPT}$ algorithm,
we could stop here and just branch in $\bigoh^*(2^{2k})$ time.
However, we can obtain the SETH-optimal running time $\bigoh^*(2^{k})$
by noting how the affine part of the instance forces assignments to certain variables.
This gives us the main theorem for the section.

\begin{theorem} \label{the:aff_fpt}
  $\QBF(\CNFAFF)$ is solvable in $\bigoh^*(2^k)$ time,
  where $k$ is the size of a CC-backdoor to $\AFF$.
\end{theorem}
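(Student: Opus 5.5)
We first apply \Cref{lem:cnfaff_kernel}: in polynomial time we either detect that $\cQ.\phi_1$ is false (and reject), or we obtain an equivalent instance $\cQ'.(\phi'_1 \land \phi_2)$ with $\phi'_1 = \kernel(\cQ,\phi_1,X)$ and $X = V(\phi_2)$, $|X| = k$. By the structure guaranteed by \Cref{def:kernel}, $\phi'_1$ has at most $k$ equations. Each equation $(A,b)$ has the form $A \subseteq X$, or $A = (A\cap X)\cup\{y_A\}$ for a single non-backdoor variable $y_A$. This $y_A$ is private, i.e.\ it occurs in no other equation and not in $\phi_2$. Moreover, since $\cQ.\phi_1 \Leftrightarrow \top$ and pivoting and elimination preserve this, we may assume $y_A$ is existential whenever it is innermost in its equation. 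The plan is to evaluate this kernel by a branching algorithm that follows the quantifier prefix but only pays for branching on backdoor variables.

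Next, we simplify the private variables. If $y_A$ is existential and is innermost in $A$, the existential player can always set it to satisfy $(A,b)$. The equation can then be removed, exactly as in the first step of the kernel argument. If $y_A$ is universal, the universal player can falsify $(A,b)$ whenever $y_A$ is innermost; otherwise some later variable of $A\cap X$ still remains free. After this, the only remaining case is that $y_A$ is existential but not innermost, so some backdoor variable of $A$ comes after $y_A$. For this case we argue, with the same strategy-stealing replacement as in step 3 of \Cref{lem:kernel}, that the equation really acts as a constraint on $X$. The variable $y_A$ then serves only as a free parity bit chosen before the later backdoor variables.

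Now we run the branching algorithm. We branch in prefix order on the at most $2k$ remaining variables. The key counting observation is that each branch on a non-backdoor variable $y_A$ is paid for by the backdoor. Along any branch, once all but one variable of an equation are fixed, the last variable is forced, and a forced assignment costs no branching. Each equation can force at most one variable. Also, each $y_A$ is tied to a distinct equation, because $y_A$ is private. From this we want to conclude that the number of genuinely free choices along any root-to-leaf path is at most $k$. Concretely, we charge each free choice of a $y_A$ to the innermost backdoor variable of its equation, which becomes forced later. This gives a search tree with $2^{k}$ leaves, and each node does polynomial work (simplification plus evaluating the clauses of $\phi_2$ at the leaves). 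The total time is therefore $\bigoh^*(2^k)$.

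The main obstacle is making this charging argument sound with respect to quantifiers. A forced variable may be universal, in which case the universal player simply violates the equation, so that branch is an immediate loss rather than a free saving. Distinct equations might also try to force the same backdoor variable. To handle the latter, we first perform Gaussian elimination on $\phi'_1$ restricted to $X$, pivoting on innermost variables as in step 2 of \Cref{def:kernel}. This makes the innermost variables of the equations pairwise distinct, so that the forced variables are distinct and the injection from free non-backdoor choices to forced backdoor variables is well defined.
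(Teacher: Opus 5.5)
Your route is the paper's: build the kernel of \Cref{lem:cnfaff_kernel}, use Gaussian elimination so that distinct equations have distinct innermost variables, note that these innermost variables are existential and therefore forced, and branch only on the rest. However, the step that bounds the number of free variables by $k$ does not go through as you wrote it. You charge each free $y_A$ to ``the innermost backdoor variable of its equation''. That map is well defined only while every $y_A$ is private, and in that setting it need not be injective. For example, kernel equations $(\{y_1,z\},b_1)$ and $(\{y_2,z\},b_2)$, with $y_1,y_2\notin X$ preceding $z\in X$, both charge to $z$. Your remedy removes the premise of the charging. Pivoting on $z$ in the first equation turns the second into $(\{y_1,y_2\},b_1\xor b_2)$. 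Now $y_1$ lies in two equations, and the innermost variable of the second equation is $y_2\notin X$. After elimination, ``its equation'' is ambiguous and ``the innermost backdoor variable'' need not exist. So the injection from free non-backdoor choices to forced backdoor variables is never actually constructed.

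The fix is to count globally, as the paper does, rather than charge locally. After elimination, each nonzero equation has a distinct innermost variable. That variable is existential: otherwise the universal player could falsify the equation, contradicting that $\cQ'.\phi'_1$ is true, and pivoting preserves truth. So it is forced, which also settles your worry about forced universal variables. The total number of variables is at most $k$ plus the number of non-backdoor variables. In the kernel, each non-backdoor variable is private to its own equation, so those equations are linearly independent. Hence elimination leaves at least as many nonzero rows, and therefore forced variables, as there are non-backdoor variables. Consequently at most $k$ variables are free on any root-to-leaf path. The paper states this as $|V(\phi'_1\land\phi_2)|\le|\phi'_1|+k$ with $|\phi'_1|$ variables assigned directly.

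Two minor points. After your simplification step, universal non-innermost $y_A$ also remain, not only existential ones; this is harmless, since they are free variables covered by the count. And the strategy-stealing ``parity bit'' discussion is not needed.
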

\begin{proof}
By \Cref{lem:cnfaff_kernel} we can construct our kernel $\cQ'. (\phi'_1 \land \phi_2)$ such that $\cQ. (\phi_1 \land \phi_2) \Leftrightarrow \cQ'. (\phi'_1 \land \phi_2)$ in polynomial time.
By pivoting on the innermost variable of each equation in $\phi'_1$,
we make the said variables private.
Now, for any equation $(A,b)\in \phi'_1$ with innermost variable $x$, 
observe that $x$ is existential since otherwise $\cQ.\phi_1 \Leftrightarrow\bot$
and the kernel is a trivially false instance.
Thus, once every variable in $A\setminus\{x\}$ has been assigned by some partial assignment $\tau$, 
we can directly assign a value to $\tau(x)$ so that $\bigoplus_{y \in A} \tau(y) = b$.
Hence, $|\phi'_1|$ variables can be assigned directly in each branch, 
and as $|V(\phi'_1 \land \phi_2)|\leq |\phi'_1| + k$, 
we only need to branch on at most $k$ variables, 
giving $\bigoh^*(2^k)$ running time.
\end{proof}

Hence, we have proven that $\QBF(\CNFAFF)$
is in $\mathrm{FPT}$ when parameterized by the size $k$ of the affine CC-backdoor. Again, since the backdoor can be an arbitrary QBF instance, this upper bound is tight under SETH.
Note that if we do not restrict ourselves to a resulting formula strictly in $\CNFAFF$, we could from \Cref{lem:cnfaff_kernel} in polynomial time construct a formula of size $|\phi_2|$ with only $k$ variables by taking private variables from $\phi'_1$ and substituting.
I.e. for any equation $(A,b)\in \phi'_1$ with innermost private variable $x\in A$, replace $x$ by $b\xor \bigoplus_{y \in A \setminus x} y$.
However, the $2k$ variable kernel is in the worst case a more compact representation.

\section{Intractability: Backdoors to Horn} \label{sec:cc_horn}
The last classical class of QBF that we consider is $\HORN$ where we, in contrast to the positive FPT results thus far, prove that CC-backdoor evaluation is not $\mathrm{FPT}$ unless $\mathrm{FPT} = \mathrm{W}[1]$.
The starting point of our reductions is the well-known
$\mathrm{W}[1]$-hard \textsc{Multipartite Independent Set (MIS)} 
problem~(see e.g. Section~13.2~in~\cite{parameterizedalgorithmsbook}):
an instance is a graph with the vertex set partitioned into $k$ parts,
where $k$ is the parameter,
and the computational question is whether the graph contains
an independent set with one vertex from each part,
i.e., whether one can choose one vertex from each part
so that no pair of chosen vertices is adjacent.
Our reduction is inspired by Lemma~14~in~\cite{DBLP:conf/ijcai/ErikssonLOOPR24}
but modified to produce a formula with a CC-backdoor to \HORN.

\ifshort
\begin{lemma} \label{lem:mis_to_horn}
  There is a polynomial-time reduction that takes
  an instance $(G, k)$ of MIS
  and produces 
  a $QBF(\CNF)$ formula $\forall Y \exists X.\phi$ with CC-backdoor to $\HORN$ of size $k$ such that $(G, k)$ is a yes-instance
  if and only if $\pref.\phi \Leftrightarrow \bot$.
\end{lemma}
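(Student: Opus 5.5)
The plan is a direct polynomial-time construction, in the spirit of Lemma~14 of \cite{DBLP:conf/ijcai/ErikssonLOOPR24}. The universal player selects vertices, the existential player tries to refute the selection, and the only non-Horn clause is a single positive clause over $k$ existential variables. Let $(G,k)$ be an MIS instance with parts $V_1,\dots,V_k$. For every vertex $v$ introduce a universal variable $y_v$, read as ``$v$ is selected''. For every part $i$ introduce an existential \emph{escape} variable $a_i$. Put all $Y$ before all $X$ in the prefix. The out-of-class clause is $(a_1 \lor \dots \lor a_k)$, and everything else will be \HORN. Hence $\{a_1,\dots,a_k\}$ is a CC-backdoor to \HORN of size $k$, as the statement requires.

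The key steps, in order, are as follows.

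(1) Design Horn gadgets so that $a_i$ can be set to $1$ exactly when the universal assignment restricted to part $i$ is not a valid choice. Invalid means that no vertex of $V_i$ is selected, that two are selected, or that a selected vertex of $V_i$ has a selected neighbour. For the first case I would use auxiliary existential variables forming a Horn ``chain'' along an ordering of $V_i$, whose last variable implies $\lnot a_i$ whenever some $y_v$ is true. Clauses of the form $(\lnot y_v \lor \lnot w)$ and $(\lnot w' \lor w)$ are Horn, so ``some $y_v$ true forces $a_i=0$'' is expressible. For the ``at least two selected'' and edge conflicts, the gadget must instead \emph{permit} $a_i=1$.

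(2) Prove the two directions.

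\emph{MIS yes $\Rightarrow$ formula false.} Let $S$ be a multipartite independent set, and let the universal player set $y_v=1$ iff $v\in S$. Unit propagation through the Horn part then forces every $a_i=0$, which falsifies the backdoor clause.

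\emph{Formula false $\Rightarrow$ MIS yes.} Take a falsifying universal assignment. By the gadget property, every part contains exactly one selected vertex and no edge joins two selected vertices, since otherwise some $a_i$ could be set to $1$ consistently with the Horn part. So the selected vertices form a multipartite independent set.

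(3) Finally, check that the construction has polynomial size and that the Horn part, together with the unconstrained auxiliaries, is satisfiable whenever some $a_i$ is allowed to be true.

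The main obstacle is step (1) for the conflict cases. In a Horn formula, more true universal literals can only force \emph{more} variables, so ``cheating'' by selecting extra vertices naturally makes refutation easier, not harder. Yet the universal player must never profit from selecting both endpoints of an edge. I would first try to handle this by flipping the semantics of half of the gadget. For each vertex I would use a second, complementary universal copy $\bar y_v$, together with Horn consistency clauses whose violation unlocks $a_i$. I would also add existential witness variables $e_{uv}$ for each edge, with clauses $(\lnot e_{uv} \lor y_u)$-style constraints arranged so that $e_{uv}$, and hence $a_i$, can be true only when both endpoints are selected. If that fails, I would swap the roles so that $y_v$ means ``$v$ is \emph{not} selected''. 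Then both an edge conflict and an ``at least two selected'' conflict correspond to \emph{fewer} true literals, and the monotone behaviour of Horn propagation works in the desired direction.
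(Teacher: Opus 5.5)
\textbf{There is a gap: step (1) is the whole content of the lemma, and your proposal never supplies it.} The gadget design you state first cannot work. You take $y_v=1$ to mean ``$v$ is selected'' and ask that $a_i$ can be set to $1$ exactly when part $i$ is invalid. No Horn formula realizes this, and adding existential auxiliaries (chains, witness variables $e_{uv}$) does not help.

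The reason is that the models of (Horn part) $\land\,(a_i)$ are closed under componentwise $\min$. Hence so is their projection onto $Y$, which is the set $P_i$ of universal assignments under which $a_i$ can be $1$. The set of invalid selections is not $\min$-closed. Assume $|V_i|\geq 3$ and nothing outside $V_i$ is selected. Selecting $\{v,w\}$ is invalid, and selecting $\{v,w'\}$ is invalid, for distinct $w,w'\in V_i\setminus\{v\}$. Their componentwise minimum selects only $v$, which is valid. This is exactly the monotonicity problem you flag yourself, but the proposal leaves it unresolved and lists candidate fixes without committing to or checking any of them. Your two directions in step (2) therefore rest on a gadget property that, as first stated, is unattainable.

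Your fallback, letting $y_v=0$ mean ``selected'', is the right idea and is what the paper does. It is much simpler than you anticipate: no chains, complementary copies, or witness variables are needed. For each $v\in V_i$ add the single Horn clause $C_v=(y_v\lor\bigvee_{u\in N_G(v)}\lnot y_u\lor\lnot x_i)$, and add the backdoor clause $(x_1\lor\dots\lor x_k)$.

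\emph{Reduction to forced variables.} Under a universal assignment $\alpha$, each $C_v$ either vanishes or reduces to $(\lnot x_i)$. Setting all $x_i=0$ satisfies the Horn part. So the formula is false iff, for every $i$, some $v_i\in V_i$ has $\alpha(y_{v_i})=0$ and $\alpha(y_u)=1$ for all $u\in N_G(v_i)$.

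\emph{False implies yes-instance.} Such witnesses are automatically pairwise non-adjacent. If $v_i$ and $v_j$ were adjacent, $\alpha(y_{v_j})$ would have to be both $1$ (as a neighbour of $v_i$) and $0$ (as a witness). So you never need to enforce ``exactly one selected per part'' or handle the ``no vertex selected'' case separately.

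\emph{Yes-instance implies false.} Given an independent set $\{v_1,\dots,v_k\}$ with one vertex per part, set $y_{v_i}=0$ and all other $y$ to $1$. Each $C_{v_i}$ then reduces to $(\lnot x_i)$, so every $x_i$ is forced to $0$ and the backdoor clause is falsified.
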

\begin{proof} (Sketch)
 Let $G$ be a graph with vertex set $V_1 \uplus \dots \uplus V_k$.
  For convenience, we refer to vertex $j$ in part $V_i$ as $(i, j)$.
  For every vertex $v = (i,j) \in V(G')$,
  add a clause $C_{v}$ to $\phi$ defined as
  \[
    C_{v} = \{y_v\} \cup \{\lnot y_u : u \in N(v)\} \cup \{\lnot x_i\}.
  \]
  Finally, add the clause $C_\exists=(x_1 \lor \ldots \lor x_k)$ to $\phi$. 
\end{proof}  
\fi

\iflong
\begin{lemma} \label{lem:mis_to_horn}
  There is a polynomial-time reduction that takes
  an instance $(G, k)$ of MIS
  and produces 
  a $QBF(\CNF)$ formula $\forall Y \exists X.\phi$ with CC-backdoor to $\HORN$ of size $k$ such that $(G, k)$ is a yes-instance
  if and only if $\pref.\phi \Leftrightarrow \bot$.
\end{lemma}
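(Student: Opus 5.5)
We take the construction from the proof sketch and verify its semantics. Let $G$ have vertex set $V_1 \uplus \dots \uplus V_k$, and write $(i,j)$ for vertex $j$ of $V_i$.

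\textbf{Construction.} Introduce a universal variable $y_v$ for every $v \in V(G)$, and put these in $Y$. Introduce an existential variable $x_i$ for every $i \in [k]$, and put these in $X$. For every vertex $v=(i,j)$ add the clause
\[
  C_v = \{y_v\} \cup \{\lnot y_u : u \in N(v)\} \cup \{\lnot x_i\},
\]
and finally add $C_\exists = (x_1 \lor \dots \lor x_k)$. The prefix is $\forall Y \exists X$.

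\textbf{Backdoor size.} Every $C_v$ has exactly one positive literal, so it is \HORN. Only $C_\exists$ falls outside \HORN, and $V(C_\exists)=\{x_1,\dots,x_k\}$. This gives a CC-backdoor of size $k$. The construction is clearly polynomial.

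\textbf{Semantics.} Read a universal assignment $\sigma$ as choosing the set $S_\sigma = \{v : \sigma(y_v)=0\}$. Because all $x_i$ come after $Y$, the formula is false iff some $\sigma$ leaves the existential player unable to satisfy the matrix. Under $\sigma$, setting $x_i=0$ satisfies every $C_v$ with $v \in V_i$ for free, but then $C_\exists$ forces some $x_i=1$. Setting $x_i=1$ is allowed iff every $C_v$ with $v\in V_i$ is satisfied by $\sigma$. Clause $C_v$ is satisfied iff $\sigma(y_v)=1$ or $\sigma(y_u)=0$ for some $u\in N(v)$. In other words, it is satisfied iff $v\notin S_\sigma$ or $v$ has a neighbour in $S_\sigma$. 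Hence the existential player wins against $\sigma$ iff for some $i$, every chosen vertex of $V_i$ has a chosen neighbour. Consequently $\pref.\phi \Leftrightarrow \bot$ iff there is a set $S$ such that every part $V_i$ contains a vertex $v_i\in S$ with $N(v_i)\cap S=\emptyset$.

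\textbf{Equivalence.} If $\{v_1,\dots,v_k\}$ is a multipartite independent set, the universal player takes $S=\{v_1,\dots,v_k\}$. Each $v_i$ lies in $S$ and has no neighbour in $S$, so the formula is false. Conversely, suppose the formula is false, witnessed by $S$ and vertices $v_i\in V_i\cap S$ with $N(v_i)\cap S=\emptyset$. Since every $v_j\in S$, no $v_j$ is a neighbour of $v_i$. Thus $\{v_1,\dots,v_k\}$ is independent and contains one vertex from each part.

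The main point needing care is this semantic characterisation, in particular the polarity convention that $y_v=0$ means chosen. With the opposite reading, $C_v$ is not Horn-consistent with the intended meaning, or the universal player wins trivially, for instance by setting all $y_v=1$. Vertices without neighbours are harmless in the argument: such a vertex, if chosen, can never have a chosen neighbour, so it always serves as a valid witness $v_i$, which is consistent with it being independent of everything.
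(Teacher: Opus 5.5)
Your proof is correct and follows the paper's construction and argument. The paper only sketches the semantics, and your characterisation via witnesses $v_i \in V_i \cap S$ with $N(v_i)\cap S=\emptyset$ is the careful version of its ``chosen vertices are pairwise non-adjacent'' step; one slip in your closing aside is that setting all $y_v=1$ satisfies every $C_v$ through the literal $y_v$, so it is a win for the existential player rather than the universal one, though that remark plays no role in the proof.
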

\begin{proof}
  Let $G$ be a graph with vertex set $V_1 \uplus \dots \uplus V_k$.
  For convenience, we refer to vertex $j$ in part $V_i$ as $(i, j)$.

  We will construct out formula $\forall Y \exists X. \phi$
  on variables $Y = \{ y_v : v \in V(G) \}$ and $X = \{x_1, \dots, x_k\}$
  that is false if and only if $(G, k)$ has an independent set
  with exactly one vertex from every part.
  For every vertex $v = (i,j) \in V(G')$,
  add a clause $C_{v}$ to $\phi$ defined as
  \[
    C_{v} = \{y_v\} \cup \{\lnot y_u : u \in N_G(v)\} \cup \{\lnot x_i\}
  \] where $N_G(v)$ is the set of neighbors to $v$.
  Finally, add the clause $C_\exists=(x_1 \lor \ldots \lor x_k)$ to $\phi$.
  Clearly the conjunction of all clauses $C_{v}$ is in \HORN, and
  $C_\exists$ contains $k$ variables. Hence, $\forall Y \exists
  X.\phi$ is a $QBF(\CNF)$ formula with a CC-backdoor to $\HORN$ of size $k$.
We only give a sketch of the correctness proof since it is similar the proof of Lemma~\ref{lem:ep_en_hard}. First observe that $\forall Y \exists X.\phi$ is false if and only if there is $\alpha \colon Y \to \{0,1\}$ to $Y$ such that for every $i \in [k]$
there is a vertex $v \in V_i$ whose clause $C_v$ is falsified except for $\lnot x_i$. Such $\alpha$ naturally corresponds to
choosing exactly one vertex from each part $V_i$. The construction of the
clauses then ensures that the chosen vertices are pairwise non-adjacent, since
otherwise $\alpha$ would have to assign both $y_u$ and $\lnot y_u$ to $0$ for
some vertex $u$. 
Then, such an $\alpha$ exists if and only if $G$ contains an independent set selecting exactly one vertex from each part.
\end{proof}

To reduce from $\HORN$ to $3\text{-}\HORN$ one 
can apply a  standard transformation
to every Horn clause with more than $3$ literals
until no such clause remains.

\begin{lemma} \label{lem:horn_to_3horn}
  Let $\Psi = \cQ. (\phi \land C)$ be a $\QBF(\CNF)$ formula
  where $C = \{x_1, \lnot x_2, \dots, \lnot x_r\}$
  for $r > 3$.
  Then $\Psi$ is equivalent to
  $\Psi' = \cQ \exists v. (\phi \land C_1 \land C_2)$,
  where $C_1 = \{x_1, \lnot x_2,\lnot v\}$ and
  $C_2 = \{v, \lnot x_3, \dots, \lnot x_r\}$.
\end{lemma}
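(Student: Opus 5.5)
The plan is the usual clause-splitting argument, adapted to the quantifier setting. The key point is that $v$ is existential and placed innermost, after every variable of $\cQ$. So when the existential player chooses $v$, the values of all of $x_1,\dots,x_r$ are already fixed, and the choice of $v$ can depend on them. Equivalently, it suffices to show that for every total assignment $\tau$ to $\vars(\phi)\cup\{x_1,\dots,x_r\}$ we have
\[ \exists v.\,(C_1\land C_2)[\tau] \Leftrightarrow C[\tau], \]
because $v$ occurs nowhere else: $v\notin \vars(\phi)$, taking it fresh. I would then argue, via the recursive definition of truth, that appending an innermost existential quantifier over a fresh variable preserves truth exactly when its matrix, after projecting $v$ away, equals the original matrix.

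First, the pointwise equivalence. Suppose $\tau$ satisfies $C$. If $\tau(x_1)=1$ or $\tau(x_2)=0$, set $v=0$: then $C_1$ holds by that literal, and $C_2$ holds by the literal $v$ being false... this needs more care. With $v=0$, $C_2=\{v,\lnot x_3,\dots\}$ needs some $\lnot x_j$ with $j\ge 3$ true, so instead I will choose $v$ as follows. If some $\tau(x_j)=0$ with $j\ge 3$, set $v=0$; then $C_1$ holds via $\lnot v$ and $C_2$ via $\lnot x_j$. Otherwise $\tau(x_1)=1$ or $\tau(x_2)=0$, so set $v=1$; then $C_2$ holds via $v$ and $C_1$ holds via $x_1$ or $\lnot x_2$. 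Conversely, suppose both $C_1$ and $C_2$ are satisfied for some value of $v$. If $v=1$, $C_1$ forces $x_1$ or $\lnot x_2$ to be true. If $v=0$, $C_2$ forces some $\lnot x_j$ with $j\ge3$ to be true. Either way $C$ is satisfied. This is ordinary resolution on $v$: the resolvent of $C_1$ and $C_2$ is exactly $C$.

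Second, lift this to the QBF. I will prove by induction on the length of $\cQ$ that $\cQ.\psi \Leftrightarrow \cQ\exists v.\psi'$ whenever $\exists v.\psi'$ and $\psi$ agree on every total assignment to the variables of $\cQ$. In the base case the prefix is empty, and the claim is exactly the pointwise equivalence. In the inductive step, substituting the outermost variable preserves the hypothesis, so the $\exists$ and $\forall$ cases follow directly from the recursive definition. This is routine. The only care needed is the hypothesis that $v$ is fresh and innermost, which the construction guarantees.

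I expect the main obstacle to be just stating this innermost-existential lifting cleanly rather than any real difficulty. I would also remark that $C_1$ is Horn and that $C_2$ keeps the single positive literal $v$, so Horn-ness is preserved and the clause length drops, which is what the iteration toward $3$-\HORN needs.
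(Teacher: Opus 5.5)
Your proof is correct and matches the standard argument the paper relies on; the paper states this lemma without proof, as a standard transformation, and the justification is that $C$ is the resolvent of $C_1$ and $C_2$ on the fresh variable $v$, so $\exists v.(C_1 \land C_2)$ agrees with $C$ under every assignment to the remaining variables, and this lifts through $\cQ$ because $v$ is existential and innermost. In a write-up, delete the aborted first case split (``set $v=0$ \ldots this needs more care''), and state the lifting step only as the direction your induction actually proves (pointwise agreement implies equivalence), not as an ``exactly when''.
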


Observe that the clauses $C$ and $C_2$ in 
the lemma above are in \HORN, $C_1$ is $3$-\HORN and $C_2$ contains at least one fewer variable than $C$.
Hence, we may first run the algorithm from 
\Cref{lem:mis_to_horn},
obtaining a formula from the MIS instance,
then apply \Cref{lem:horn_to_3horn} to that formula
until every \HORN clause has size at most $3$.
Thus obtaining an FPT-reduction from
MIS to the CC-backdoor evaluation problem to $3$-\HORN.
\fi

\ifshort
A standard construction allows us to reduce from HORN to 3-HORN and we obtain the following tightened result (see the technical appendix for details).
\fi

\begin{theorem} \label{the:horn_hard}
  $\QBF(\CNF)$ is $\mathrm{W}[1]$-hard parameterized by
  the CC-backdoor size to $3$-\HORN even on
  formulas with quantifier prefix of the form $\forall \dots \forall \exists \dots \exists$.
\end{theorem}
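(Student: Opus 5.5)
The plan is an FPT-reduction from \textsc{Multipartite Independent Set}, built in three stages.

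\textbf{Stage 1: apply \Cref{lem:mis_to_horn}.} Given an instance $(G,k)$, this lemma yields in polynomial time a formula $\forall Y \exists X.\phi$. The formula is false exactly when $(G,k)$ is a yes-instance. Its matrix is a \HORN part, namely the clauses $C_v$, together with the single clause $C_\exists$ over $X=\{x_1,\dots,x_k\}$. So $X$ is a CC-backdoor of size $k$ into \HORN.

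\textbf{Stage 2: reduce clause width.} Repeatedly apply \Cref{lem:horn_to_3horn} to every clause $C_v$ with more than $3$ literals. Each application introduces a fresh existential variable $v$ appended innermost, so the prefix keeps the form $\forall\dots\forall\exists\dots\exists$.

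I need to check that every $C_v$ has the shape the lemma requires. It needs at most one positive literal, placed in the role of $x_1$. Here $C_v$ has exactly one positive literal $y_v$, and its remaining literals are negative. The lemma preserves equivalence. The new clauses $C_1$ and $C_2$ are Horn, $C_1$ has size $3$, and $C_2$ is strictly shorter than $C$. So by induction on total clause length, after polynomially many steps every clause other than $C_\exists$ is in $3$-\HORN.

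The clause $C_\exists$ is never touched. It has $k$ positive literals and is not Horn, so I leave it in the backdoor part.

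\textbf{Stage 3: account for the backdoor.} After Stage 2 the matrix is $\phi' \land C_\exists$ with $\phi'\in 3$-\HORN, and $V(C_\exists)=X$ has size $k$. Hence $X$ remains a CC-backdoor of size $k$ into $3$-\HORN.

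This gives a polynomial-time reduction, with parameter $k'=k$, from MIS to the complement of $\QBF(\CNF)$ restricted to such instances. Evaluating a QBF under an FPT-algorithm immediately decides its complement as well. Equivalently, negating the formula gives a reduction to the complement problem, and W[1]-hardness of a decision problem transfers to its complement under FPT-Turing reductions. More simply, an FPT algorithm for the evaluation problem would decide MIS in FPT time, giving $\mathrm{FPT}=\mathrm{W}[1]$. Since MIS is W[1]-hard, this gives the claim.

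\textbf{Main obstacle.} The delicate point is not the width reduction but the correctness of \Cref{lem:mis_to_horn}, which is stated in the paper with only a sketch. I take it as given here.

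The remaining care is needed in \Cref{lem:horn_to_3horn}. It must be applied with the auxiliary variable quantified existentially and innermost, which keeps the prefix in the required $\forall^*\exists^*$ form. Equivalence holds because the existential player can set $v$ to the truth value of $\lnot x_3\lor\dots\lor\lnot x_r$ negated appropriately, namely $v=x_3\land\dots\land x_r$, after all other variables are fixed. Since $v$ is innermost, this choice is always available.
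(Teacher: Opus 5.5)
Your proposal is correct and follows the paper's proof. You reduce from MIS via Lemma~\ref{lem:mis_to_horn}, then repeatedly apply Lemma~\ref{lem:horn_to_3horn} to the clauses $C_v$, with the fresh variables existential and innermost, so the prefix stays $\forall\dots\forall\exists\dots\exists$ and the size-$k$ backdoor $X=V(C_\exists)$ is never touched.

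You are more careful than the paper about the fact that yes-instances of MIS map to \emph{false} formulas. Your closing argument, that an FPT algorithm for evaluation would decide MIS in FPT time, is the right way to handle this. Drop the aside about negating the formula, since negation leaves CNF form and does not produce an instance of the same problem.
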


\medskip

Having established $\mathrm{W}[1]$-hardness for CC-backdoors to \HORN,
it is natural to ask whether the hardness also holds for its subclasses.
To this end, recall that $\EP$ (and its dual $\EN$) is the class of formulas that allows
positive (negative) clauses of unbounded arity, implications and negative (positive) unit clauses.

\begin{lemma} \label{lem:ep_en_hard}
  $\QBF(\CNF)$ is $\mathrm{W}[1]$-hard parameterized by
  the CC-backdoor size to $\EP$ or $\EN$ even on
  formulas with quantifier prefix of the form $\forall \dots \forall \exists \dots \exists$.
\end{lemma}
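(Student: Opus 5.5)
Our plan is to give a polynomial-time reduction from MIS, in the spirit of \Cref{lem:mis_to_horn}, and to prove correctness directly via the game semantics. We treat $\EN$ first; the $\EP$ case follows by flipping every literal, which maps $\EN$ clauses to $\EP$ clauses, leaves the backdoor unchanged, and preserves the truth value. The difficulty compared with \Cref{lem:mis_to_horn} is that the clauses $C_v$ there contain a positive literal $y_v$ together with many negative ones, so they are not in $\EN$. Instead we encode the selection of vertices by purely negative clauses and implications, and keep all ``mixed'' behaviour in the $k$-variable backdoor.

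The construction uses a graph $G$ with parts $V_1,\dots,V_k$, universal variables $y_v$ for $v\in V(G)$, and existential variables $x_1,\dots,x_k$ placed after all $y_v$. The backdoor clause is $C_\exists=(x_1\lor\dots\lor x_k)$, as before. For every edge $uv$ with $u\in V_i$, $v\in V_j$, $i\neq j$, we add the negative clause $(\lnot y_u\lor\lnot y_v)$. We also add negative clauses stating that at most one vertex per part is chosen. These constraints on universal variables are problematic, however, since a universal player who violates them makes the formula false outright. We therefore reverse roles: we want the universal player to \emph{lose} when violating them. To achieve this, for every vertex $v=(i,j)$ we let $x_i$ be free when $y_v$ is true by adding the implication-style negative clause $(\lnot y_v\lor\lnot x_i)$, which lies in $\EN$. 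Thus choosing $y_v=1$ with $v\in V_i$ forces $x_i=0$. The universal player falsifies $C_\exists$ exactly when every part contains a chosen vertex. To prevent universal from choosing all vertices, every violation must give $\exists$ an escape. We do this with fresh existential variables $z_{uv}$ for edges (and for same-part pairs), placed after the $y$'s, together with implications $z_{uv}\implies x_i$ (binary implication in $\EN$) and negative clauses $(\lnot y_u\lor\lnot y_v\lor\ldots)$. Concretely, the structure should make each pair $x_i$ ``recoverable'' when a conflict occurs.

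The cleanest way to realise the escape is to move the conflict detection into the backdoor by replacing $C_\exists$ with $(x_1\lor\dots\lor x_k\lor w)$ and adding a new variable $w$, so that the backdoor has size $k+1$. Here $w$ is existential, and in-class clauses $(\lnot w\lor \lnot y_u\lor\lnot y_v)$ are added for each conflicting pair, read in the direction that $w=1$ is allowed only when a conflict exists. Since these clauses must be negative, $w=1$ should be forbidden unless a conflict is present. We expect this polarity issue to be the main obstacle: with only negative clauses and implications, ``$w$ may be true iff some conflicting pair is chosen'' is a monotone-OR condition that is not directly expressible. Our first attempt will be to introduce an existential $e_{uv}$ per conflicting pair with clauses $(\lnot e_{uv}\lor y_u)$ and $(\lnot e_{uv}\lor y_v)$. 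These clauses are not negative, so we flip the $y$ polarity: we encode ``$v$ chosen'' as $y_v=0$. With this encoding, $e_{uv}\implies\lnot y_u$ becomes the negative clause $(\lnot e_{uv}\lor\lnot y_u)$, and the part constraint becomes $(\lnot y_v\lor \lnot x_i)$ turned into the implication $x_i\implies y_v$ for all $v\in V_i$ (``$x_i$ true only if no vertex of $V_i$ is chosen''), which fails when $|V_i|>1$. So we will instead use $x_{i}$ per vertex within a single backdoor clause only if the budget allows. A fallback is to let the backdoor clause be $(e\lor x_1\lor\dots\lor x_k)$ with $e$ implied by the $e_{uv}$ via implications $e_{uv}\implies e$ and $e\implies\dots$.

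For correctness we would argue as in \Cref{lem:mis_to_horn}: the formula is false iff some universal assignment lets no existential completion satisfy the backdoor clause. Such an assignment selects at least one vertex per part and no conflicting pair, and therefore yields a multicolored independent set; the converse direction is immediate. The reduction uses $\bigoh(k)$ backdoor variables, and the prefix is $\forall\cdots\forall\exists\cdots\exists$, which proves the claim.
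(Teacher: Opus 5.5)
There is a genuine gap. The proposal never settles on a construction that works, and the closing correctness paragraph argues about a formula that is never specified. The whole difficulty of the lemma is to make the universal player \emph{lose} when choosing a conflicting pair (two vertices of one part, or two adjacent vertices) using only negative clauses and implications, and each of your attempts breaks exactly there. As you observe, the clauses $(\lnot w\lor\lnot y_u\lor\lnot y_v)$ forbid the escape precisely when a conflict occurs. In the fallback, the implications $e_{uv}\implies e$ only force $e$ upwards and never restrict it, so unless the unspecified ``$e\implies\dots$'' does something, the existential player just sets $e=1$ and the formula is always true. The clause that would directly say ``$e$ may be $1$ only if some $e_{uv}$ is $1$'', namely $(\lnot e\lor\bigvee_{uv}e_{uv})$, is not in \EN. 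You also discarded a correct ingredient. With ``$v$ chosen'' encoded as $y_v=0$, the implications $x_i\implies y_v$ for all $v\in V_i$ do not fail when $|V_i|>1$: they force $x_i=0$ exactly when at least one vertex of $V_i$ is chosen, which is the ``at least one per part'' half of what you need. (Also, $(\lnot e_{uv}\lor y_u)$ is an implication and hence already in \EN.)

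The idea you are missing, which is the paper's, is to avoid detecting conflicts altogether. Instead, make the killing of a backdoor variable conditional on the \emph{absence} of conflicts around one vertex; that is a conjunction, so it fits into a single negative clause. Each part gets two backdoor variables $x_i,z_i$, and $y_v=0$ means ``chosen''. The paper's text states the opposite convention, but its clauses only behave as claimed with this one. The implications $z_i\implies y_v$ for $v\in V_i$ force $z_i=0$ iff some vertex of $V_i$ is chosen. For each $v\in V_i$, the negative clause $\{\lnot x_i\}\cup\{\lnot y_w : w\in V_i\setminus\{v\}\}\cup\{\lnot y_u : u\in N_G(v)\}$ reduces to $(\lnot x_i)$ iff no other vertex of $V_i$ and no neighbour of $v$ is chosen. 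With the single backdoor clause $(x_1\lor\dots\lor x_k\lor z_1\lor\dots\lor z_k)$, the universal player wins iff some assignment to the $y_v$ forces all $2k$ backdoor variables to $0$. That happens iff the chosen set has exactly one vertex per part and is independent. Your size-$(k+1)$ plan can also be rescued by dualising the escape. Take an existential $a_{uv}$ for each conflicting pair with $y_u\implies a_{uv}$ and $y_v\implies a_{uv}$, plus one negative clause $(\lnot e\lor\bigvee_{uv}\lnot a_{uv})$. Then $e$ is forced to $0$ iff no conflicting pair is chosen. Together with $x_i\implies y_v$ and the backdoor clause $(x_1\lor\dots\lor x_k\lor e)$, this is a correct reduction. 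As written, though, neither mechanism appears in your proposal.
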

\begin{proof}
  We show the proof for $\EN$ and the argument for $\EP$ is analogous.
  The hardness follows by a modification of the proof in \Cref{lem:mis_to_horn}.
  We start with an instance $(G, k)$ of MIS and in polynomial time
  construct a formula $\Phi = \forall Y \exists X. \phi_1 \land \phi_2$
  such that $\phi_1 \in \EN$, $V(\phi_2) = X = \{x_1,\dots,x_k,z_1,\dots,z_k\}$ and 
  $\Phi \Leftrightarrow \bot$ if and only if $(G, k)$ is a yes-instance.  

  Let  $V_1 \uplus \dots \uplus V_k$ be the vertex set of $G$.
  We let $Y = \{y_v : v \in V(G)\}$.
  For a variable $v \in V_i$, define the clause
  \[
    C_v = \{\neg y_w : w \in V_i \setminus \{v\}\} \cup \{\neg y_u : u \in N_G(v) \} \cup \{\neg x_i\}.
  \]
  Observe that $C_v$ is negative.
  For a color class $i \in [k]$, let
  \[ 
    \psi_i = \bigwedge_{v \in V_i} (z_i \to y_v).
  \]
  Finally, let $\phi_1 = \bigwedge_{v \in V(G)} C_v \land \bigwedge_{i=1}^{k} \psi_i$
  and $\phi_2$ contain a single clause $\{x_1,\dots,x_k,z_1,\dots,z_k\}$.

  Observe that $\Phi \Leftrightarrow \bot$ if and only if there is
  an assignment $\alpha : Y \to \{0,1\}$ such that $\Phi[\alpha]$ is unsatisfiable.
  The latter holds if and only if 
  $\Phi[\alpha]$ implies $(\neg x_i)$ and $(\neg z_i)$ for all $i \in [k]$.
  If $\alpha(y_v) = 1$, we will say that $\alpha$ \emph{selects} $v$.
  To ensure that $\Phi[\alpha]$ implies $(\neg z_i)$, by the definition of $\psi_i$,
  is necessary and sufficient for $\alpha$ to 
  select at least one vertex in $V_i$.
  To ensure that $\Phi[\alpha]$ implies $(\neg x_i)$, it is necessary and sufficient
  for there to be a clause $C_v$ with $v \in V_i$ such that
  $\alpha$ falsifies all literals in $C_v \setminus \{\neg x_i\}$.
  Thus, if $\alpha$ selects some $v \in V_i$,
  then to make sure that $\Phi[\alpha]$ implies $(\neg x_i)$,
  $\alpha$ must not select any other vertex $w \in V_i \setminus \{v\}$
  or any neighbor $u$ of $v$ in $G$.
  Thus, the set of vertices selected by $\alpha$ must form
  an independent set in $G$ with at least one vertex
  from each color class.
\end{proof}

We note that the reduction from \Cref{lem:horn_to_3horn}
does not apply to $\EP$, so the hardness for bounded-arity
formulas remains open, which we discuss in greater detail in Section~\ref{sec:qcsp}.
To complement this hardness result, we show that disallowing implications
makes the problem fixed-parameter tractable.

\begin{lemma} \label{lem:ep_no_impl_fpt}
  Let $\cC \subseteq \EP$ be the class of formulas with
  positive clauses (of any size) and negative unit clauses.
  Then $\QBF(\CNF)$ is $\mathrm{FPT}$ parameterized by
  the CC-backdoor size to $\cC$.
\end{lemma}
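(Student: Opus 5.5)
The plan is to first simplify the in-class part and then reduce to an instance whose size is bounded in $k$. Write $\Phi = \cQ.(\phi_1 \land \phi_2)$ with $\phi_1 \in \cC$ and $X = V(\phi_2)$, $|X| = k$.

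\textbf{Step 1: unit clauses.} Every negative unit clause $(\lnot x)$ in $\phi_1$ forces $x = 0$.
\begin{itemize}
\item If $x$ is universal, the universal player sets $x = 1$ and falsifies the clause, so we reject.
\item If $x$ is existential, we substitute $x = 0$ throughout $\phi_1 \land \phi_2$ and remove $x$ from the prefix.
\end{itemize}
Substituting $x=0$ deletes $x$ from the positive clauses and may create an empty clause, which we also reject. It never introduces new literals, so $V(\phi_2)$ does not grow and the backdoor stays valid. After this step, $\phi_1$ consists only of positive clauses.

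\textbf{Step 2: each positive clause reduces to its innermost outside existential.} The key claim concerns a positive clause $C \in \phi_1$.
\begin{itemize}
\item If $C$ contains an existential variable $x \notin X$, then $x$ occurs only positively in the whole matrix. All of $\phi_1$ is positive, and $x \notin V(\phi_2)$.
\item The existential player may therefore set every such $x$ to $1$ without loss, by monotonicity. Setting a variable to $1$ can only satisfy more clauses, since no clause contains $\lnot x$.
\item So every clause containing an existential non-backdoor variable can be deleted, and those variables disappear.
\end{itemize}

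\textbf{Step 3: universal outside variables are set to $0$.} Now every remaining clause of $\phi_1$ has the form $C = U_C \cup X_C$. Here $U_C$ consists of universal variables not in $X$, and $X_C \subseteq X$. Each variable $u \in U_C$ occurs only positively and only in $\phi_1$. By the symmetric monotonicity argument, the universal player may set $u = 0$ without loss. So we substitute $0$ for all universal non-backdoor variables. This turns each such clause into the positive clause $X_C$ over backdoor variables, or into the empty clause, in which case we reject.

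\textbf{Step 4: solve the residual instance.} What remains is a QBF whose matrix is over $X$ only. It has at most $2^k$ distinct clauses from the reduced $\phi_1$ plus the clauses of $\phi_2$, all on $k$ variables. We evaluate it by brute force over the prefix restricted to $X$ in $\bigoh^*(2^k)$ time. Alternatively, this residual instance serves as a kernel with $k$ variables.

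\textbf{Main obstacle.} The delicate point is justifying the monotonicity steps rigorously within the strategy semantics, in the same style as the proof of \Cref{lem:2cnf-weak-key}. We need to take a winning strategy and modify it by fixing the monotone variables. Each such variable is pure: it appears only positively in $\phi_1$ and not at all in $\phi_2$. Because of this, the modified strategy wins on every leaf where the original did. One must also check that the reductions in Steps 1–3 can be applied in any order without the prefix interfering. This holds because each argument is local to a single pure variable and independent of its position in the prefix.
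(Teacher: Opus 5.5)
Your proposal is correct and follows essentially the same route as the paper. Both arguments resolve unit clauses first (reject if the variable is universal, otherwise fix it), then set non-backdoor existentials to $1$ and non-backdoor universals to $0$ by monotonicity of the purely positive remaining clauses, and finally brute-force the residual formula over the $k$ backdoor variables in $\bigoh^*(2^k)$ time. Your version is simply more explicit about the strategy-based justification of the monotonicity steps, which the paper dismisses as ``clearly safe''.
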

\begin{proof} 
  First, we may remove unit clauses:
  if there is a unit clause with a universal variable,
  then we have a no-instance;
  if there is a unit clause with an existential variable,
  then we have assign that variable to the value
  satisfying the unit clause.
  Observe that when no unit clauses remain,
  all clauses outside the backdoor are positive.
  Now we remove non-backdoor variables:
  if a non-backdoor variable is existential, 
  assign it $1$, and if a non-backdoor
  variable is universal, assign it $0$.
  Clearly, these choices are safe for positive clauses.
  Finally, we are left with a formula with only backdoor variables,
  and it can be evaluated in time $\bigoh^*(2^k)$,
  where $k$ is the number of variables.
\end{proof}

Naturally, the result extends to the corresponding subclass of $\EN$ without implication.

\section{Towards a Full CC-Classification} \label{sec:qcsp}

\newcommand{\evaluation}[1]{\text{\textsc{$#1$-EV}}}

In this section we analyse to which extent the results so far completely describe  the complexity of the CC-evaluation problem. As we will prove, there is only one case (or two, if we also count the dual) missing for a full dichotomy, namely, bounded arity $d$-$\EP$ (and $d$-$\EN$) for $d \geq 3$.
To study this in a general setting we first define
a \emph{Boolean constraint language} $\Gamma$
as a set of Boolean relations.
Let $\CCC_{\Gamma}$ be the class of 
conjunctive formulas with atoms only using
relations from $\Gamma$. 
It is common to refer to the problem $\QBF(\CCC_{\Gamma})$
as $\QCSP(\Gamma)$ or QCSP$(\Gamma)$.
We are thus interested in the complexity
of conjunctive $\QBF$ parameterized by clause covering
backdoor size to $\CCC_{\Gamma}$ for every $\Gamma$. We let $\evaluation{\Gamma}$ be the problem of deciding $\QBF(\CNF + \Gamma)$ when given a CC-backdoor $B$ into $\CCC_{\Gamma}$ (where $\CNF + \Gamma$ is a shorthand for $\CNF + \CCC_{\Gamma}$). Naturally, we are then interested in $\evaluation{\Gamma}$ when parameterized by $|B|$. As we will prove, we are surprisingly close to a full parameterized complexity dichotomy.

To avoid exhaustive case analyses we apply the \emph{algebraic approach} where the basic idea is to equip each set of relations $\Gamma$ with an algebraic object $\pol(\Gamma)$ (\emph{polymorphisms}) such that $\pol(\Gamma)$ abstracts away from many irrelevant details, but still decides the (classical) complexity of $\QCSP(\Gamma)$. 
Concretely, an operation $f \colon \{0,1\}^d \to \{0,1\}$
is a \emph{polymorphism} of $\Gamma$ if the 
following holds for all relations $R \in \Gamma$:
for every $(d \times r)$-matrix of $A = (a_{ij})$
such that all rows $(a_{i1}, \dots, a_{ir})$ belong to $R$,
the tuple $(b_1, \dots, b_r)$ obtained by applying
$f$ to $A$ column-wise, i.e.,
$b_j = f(a_{1j}, a_{2j}, \dots, a_{dj})$,
also belongs to $R$. We let $\pol(\Gamma)$ be the set of all polymorphisms of $\Gamma$. 

\ifshort

We let $\maj(x,y,z) := (x \land y) \lor (y \land z) \lor (z \land x)$ and $\mnrty(x,y,z) = x \oplus y \oplus z$ (where $\oplus$ is the XOR operation. We obtain the following dichotomy.

\begin{theorem}($\star$) \label{the:classification}
  Let $\Gamma$ be a Boolean constraint language.
  Then $\evaluation{\Gamma}$ parameterized by the CC-backdoor size to $\Gamma$ is 
  \begin{itemize}
    \item FPT if $\maj$, $\mnrty$, $x \land (y \lor z)$, or $x \land (y \lor \overbar{z})$ are polymorphisms of $\Gamma$,
    \item W[1]-hard if $\lor$ or $\land$ are polymorphisms of 
      $\Gamma$ but $\maj$ and $x \land (y \lor z)$, $x \land (y \lor \overbar{z})$  are not,
      even on formulas with a single quantifier alternation, and
    \item para\PSPACE-hard otherwise.
  \end{itemize}
\end{theorem}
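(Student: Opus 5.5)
The plan is to reduce the classification to Post's lattice by showing that the complexity of $\evaluation{\Gamma}$ depends only on $\pol(\Gamma)$, and then to read off each case from an earlier result. The first step is a reduction lemma. If $\Gamma$ is finite and $\pol(\Gamma) \subseteq \pol(\Delta)$, then every relation of $\Delta$ has a primitive positive definition over $\Gamma \cup \{=\}$, and I claim this gives $\evaluation{\Delta} \leq_{\mathrm{FPT}} \evaluation{\Gamma}$ with the parameter preserved exactly. Let $\cQ.(\psi \land \phi_2)$ be an instance with $\psi \in \CCC_\Delta$ and $V(\phi_2) = B$. I replace every $\Delta$-atom in $\psi$ by its pp-definition. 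The fresh auxiliary variables are existentially quantified and placed innermost in the prefix; this is sound because each block of auxiliaries is local to one atom and none of them occurs in $\phi_2$. Equality atoms are removed by identifying variables. When a backdoor variable is identified with a non-backdoor variable, the merged variable takes the outermost of the two quantifier positions. If that merge would force two universals, or a universal and an earlier existential, to coincide, the instance is trivially false and we output a fixed no-instance. The CNF part $\phi_2$ is untouched, so $B$ remains a CC-backdoor of the same size. I will also need the converse direction for sublanguages, which is immediate.

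With this lemma in hand, the FPT cases follow from the paper's algorithms. If $\maj \in \pol(\Gamma)$, then every relation of $\Gamma$ is pp-definable in $\TCNF$, and Theorem~\ref{the:2cnf_fpt} gives FPT. If $\mnrty \in \pol(\Gamma)$, then $\Gamma$ is affine, and Theorem~\ref{the:aff_fpt} applies. For $x \land (y \lor z)$ I will check directly that it already destroys ternary positive clauses: apply it to the rows $(1,0,0)$, $(0,1,0)$, $(0,0,1)$. Hence the co-clone it determines consists only of binary relations from $\{x \lor y,\ x \to y,\ \lnot x\}$, which is contained in $\TCNF$, and the same argument handles $x \land (y \lor \overbar{z})$. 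If this reduction to $\maj$ fails for some co-clone in this interval, the fallback is Lemma~\ref{lem:ep_no_impl_fpt}-style propagation combined with branching on the backdoor.

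For the W[1]-hard case, if $\land$ (respectively $\lor$) is a polymorphism, then $\Gamma$ is Horn (respectively dual Horn). I will locate the co-clones of Post's lattice that lie below $\mathrm{Inv}(\land)$ but are not contained in the FPT co-clones. Each of them pp-defines either $3$-$\HORN$ or some $\EN$-type relations of arbitrary arity. Theorem~\ref{the:horn_hard} and Lemma~\ref{lem:ep_en_hard} then transfer through the reduction lemma, and these constructions already use a single alternation $\forall\exists$. The main obstacle is exactly here. The hardness proofs above use unbounded arity, whereas a finite $\Gamma$ has bounded arity, so I must check carefully in Post's lattice that every remaining Horn co-clone either contains $3$-$\HORN$ or falls under one of the FPT conditions. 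In particular I must confirm that the $d$-$\EN$ co-clones are absorbed by the stated polymorphism conditions rather than left open; if they are not, the statement needs adjusting. This lattice bookkeeping is what I expect to take the most work.

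For the remaining case, Schaefer's classification for QCSP says that if $\Gamma$ is neither bijunctive, affine, Horn nor dual Horn, then $\QCSP(\Gamma)$ is \PSPACE-complete. Taking the empty backdoor ($k=0$) gives para\PSPACE-hardness.
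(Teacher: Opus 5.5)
Your overall architecture (pp-definition reductions, Schaefer for the para\PSPACE case, importing the earlier algorithms and hardness results) matches the paper's sketch. The step that fails is your FPT treatment of $x\land(y\lor z)$ and $x\land(y\lor\overbar{z})$.

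You tested these operations only against positive clauses, but the unbounded-arity relations they preserve are \emph{negative} clauses. Both operations are bounded above by their first argument. Applied to tuples of $(\lnot x_1\lor\dots\lor\lnot x_r)$, they therefore output a $0$ wherever the first input tuple has one. Both also preserve unit clauses, and $x\land(y\lor z)$, being monotone, preserves $x\to y$. Conversely, $x\land(y\lor z)$ does not preserve $x\lor y$: the rows $(1,0),(0,1),(0,1)$ give $(0,0)$. So your list of relations is wrong in both directions. The co-clone of $x\land(y\lor z)$ is all of $\EN$ (item~5 of Lemma~\ref{lemma:pol_vs_relation}), and that of $x\land(y\lor\overbar z)$ is negative clauses of every arity plus unit clauses. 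Neither lies inside $\TCNF$, since $\maj$ does not preserve $(\lnot x\lor\lnot y\lor\lnot z)$.

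For $x\land(y\lor\overbar z)$ your fallback is the correct argument: it is the dual of Lemma~\ref{lem:ep_no_impl_fpt}, which is also what the paper's sketch appeals to. For $x\land(y\lor z)$ it is not. With implications present, fixing non-backdoor variables to a monotone default is unsound. Indeed, Lemma~\ref{lem:ep_en_hard} shows that $\evaluation{\EN}$ is $\mathrm{W}[1]$-hard even with a $\forall\exists$ prefix.

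Consequently, this part of the first bullet cannot be proved with anything available. For $\Gamma=\EN$ it would give $\mathrm{FPT}=\mathrm{W}[1]$. For finite $\Gamma$ with $x\land(y\lor z)\in\pol(\Gamma)$, it amounts to putting $\evaluation{d\text{-}\EN}$, $d\ge 3$, in FPT, which the paper explicitly leaves open. This is exactly the worry you raise at the end of your hardness paragraph, except that the $d$-$\EN$ co-clones are absorbed by the FPT bullet rather than the hardness one. The paper's detailed version accordingly restricts the dichotomy to languages that are neither $\EP$- nor $\EN$-languages and treats those separately. There, FPT holds when $x\lor(y\land\overbar z)$ or $t_3$ (or their duals) are polymorphisms; otherwise the problem is FPT-equivalent to $\evaluation{d\text{-}\EP}$ or its dual.

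Two smaller points affect your hardness bookkeeping. First, the second bullet only makes sense up to duality (item~2 of Theorem~\ref{thm:can_reduce}). Read literally, it would declare $\{(x\lor y\lor z),(\lnot x)\}$ $\mathrm{W}[1]$-hard: that language has $\max$ but none of $\maj$, $x\land(y\lor z)$, $x\land(y\lor\overbar z)$ as polymorphisms, yet it is FPT by Lemma~\ref{lem:ep_no_impl_fpt}. Second, on the Horn side the remaining co-clones are generated by $x\land y\to z$ with or without constants. You therefore need item~3 of Theorem~\ref{thm:can_reduce} before $3$-$\HORN$ becomes pp-definable.
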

\begin{proof} (Sketch)
One can first utilize Schaefer's dichotomy theorem~\cite{sch78} for Boolean QCSP to separate the \PSPACE-complete cases to the tractable cases. Then, it is known (see, e.g., \cite{chen2006rendezvous}) that polymorphisms of constraint languages can be described by certain logical definition (\emph{primitive positive definitions}) that can be used to obtain FPT-reduction between CC-backdoor evaluation problems. This can be exploited to show that the positive cases from the preceding sections (\EP, \TCNF, \AFF) together with their duals cover all possible FPT cases.
\end{proof}
\fi 

\iflong
For a Boolean function $f \colon \{0,1\}^d \to \{0,1\}$ we define its \emph{dual} operation $f^{\mathrm{dual}}$ as $f^{\mathrm{dual}}(\overbar{x_1}, \ldots, \overbar{x_d}) = \overbar{f(x_1, \ldots, x_d)}$ for all $x_1, \ldots, x_d \in \{0,1\}$, where $\overbar{x_i}$ is Boolean complement. Similarly, define $R^{\mathrm{dual}} = \{(\overbar{x_1}, \ldots, \overbar{x_n}) \mid (x_1, \ldots, x_n) \in R\}$ for an $n$-ary relation $R$, and  $\Gamma^{\mathrm{dual}} = \{R^{\mathrm{dual}} \mid R \in \Gamma\}$ for a set of relations $\Gamma$. It is then well-known that $\{f^\mathrm{dual} \mid f \in \pol(\Gamma)\} = \pol(\Gamma^{\mathrm{dual}})$.
Last, say that a function $f$ is \emph{idempotent} if $f(x, \ldots, x) = x$ for all $x \in \{0,1\}$.  Then, it is well-known and easy to see that each function in $\pol(\Gamma^+)$, where $\Gamma^+ = \Gamma \cup \{\{(0)\}, \{(1)\}\}$, is idempotent. We then have the following basic relationships.

\begin{theorem} \label{thm:can_reduce}
    Let $\Gamma$ be a (not necessarily finite) constraint language and let $\Delta$ be a finite constraint language. The following statements are true.

    \begin{enumerate}
        \item 
    If $\pol(\Gamma) \subseteq \pol(\Delta)$ then $\evaluation{\Delta} \leq \evaluation{\Gamma}$,
        \item $\evaluation{\Gamma}$ and $\evaluation{\Gamma^{\mathrm{dual}}}$ are FPT-equivalent,
        \item 
     $\evaluation{\Gamma^+}$ and $\evaluation{\Gamma}$ are FPT-equivalent.
     \end{enumerate}
\end{theorem}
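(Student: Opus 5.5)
We handle the three items separately. Throughout, an instance of $\evaluation{\Gamma}$ is $\cQ.(\phi_1 \land \phi_2)$ with $\phi_1 \in \CCC_\Gamma$, $\phi_2$ in \CNF, and $B = V(\phi_2)$ of size $k$. Each reduction will rewrite only $\phi_1$, extend the prefix by innermost existential variables, and keep $\phi_2$ (hence $B$) unchanged. This bounds the parameter automatically.

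For item 1 we use the Galois connection for polymorphisms: $\pol(\Gamma) \subseteq \pol(\Delta)$ implies that every $R \in \Delta$ has a primitive positive definition over $\Gamma \cup \{=\}$. Concretely, $R(x_1,\dots,x_r) \equiv \exists z_1 \dots z_s . \psi_R$ with $\psi_R$ a conjunction of $\Gamma$-atoms and equalities. Since $\Delta$ is finite, we fix one such definition per relation; this is a constant-size table, so we do not need to compute definitions in general.

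Given an instance, we replace each $\Delta$-atom of $\phi_1$ by $\psi_R$ on fresh variables $z$, and append $\exists z$ at the innermost end of the prefix. This preserves truth: once all original variables are assigned, the existential player can choose the $z$'s exactly when the atom is satisfied. The equality atoms we remove by identifying variables, i.e.\ substituting the later-quantified variable by the earlier one. Some care is needed. If an equality joins two original universal variables, or a universal variable with an earlier existential one, the atom is unsatisfiable in the game, and we may output a trivial no-instance. If identifications touch backdoor variables, the backdoor of $\phi_2$ can only shrink or stay the same, since we substitute into $\phi_2$ as well. This step, getting rid of equality and avoiding new clauses over backdoor variables, is the main subtlety; the rest is routine. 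The new $\Gamma$-part is in $\CCC_\Gamma$, and the output size is linear.

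For item 2 we complement every variable. A QBF $\cQ.\phi$ is equivalent to $\cQ.\phi'$, where $\phi'$ replaces each atom $R(\bar x)$ by $R^{\mathrm{dual}}(\bar x)$ and each literal in $\phi_2$ by its negation; this is just renaming $x \mapsto \overbar{x}$, which commutes with quantifiers. The backdoor set stays the same and $\phi_1'$ lies in $\CCC_{\Gamma^{\mathrm{dual}}}$, so the reduction goes both ways.

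For item 3, one direction is trivial since $\Gamma \subseteq \Gamma^+$. For the other, the constant atoms $\{(0)\}(x)$ and $\{(1)\}(x)$ are unit constraints. We eliminate them by substitution: a unit on a universal variable gives a no-instance, and a unit on an existential variable is replaced by setting it. The problem is that substituting a constant into a $\Gamma$-atom yields a relation possibly outside $\Gamma$. Instead, we introduce two fresh outermost existential variables $c_0, c_1$, replace each constant atom on $x$ by the identification $x = c_b$, and enforce $c_0 \neq c_1$ using only $\Gamma$. This may fail if $\Gamma$ cannot express disequality. The fallback is to put $(c_1) \land (\lnot c_0)$ into $\phi_2$, raising the parameter by only $2$, which is still an FPT-reduction. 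We then argue the usual self-dual trick: if both constants are swapped, the formula maps to its dual in the relevant sense. This is avoided entirely by the fallback, which we will adopt as the main construction.
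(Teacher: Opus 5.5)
Your proposal is correct and follows the paper's proof essentially step by step: for item 1, a precomputed table of pp-definitions with fresh innermost existential variables, where equalities are removed by identification or by declaring a no-instance; for item 2, complementing every value; and for item 3, identifying all constant-constrained variables with two fresh variables whose unit clauses $(c_1)\land(\lnot c_0)$ are moved into the backdoor at a cost of $+2$. You also fill in details the paper leaves implicit, namely identifications touching $V(\phi_2)$, universal variables under constant atoms, and placing $c_0,c_1$ outermost; the exploratory remarks about expressing $c_0 \neq c_1$ in $\Gamma$ and the ``self-dual trick'' are unnecessary and can simply be dropped.
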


\begin{proof}
    We prove each item in turn. The first claim implies~\cite{chen2006rendezvous} that each relation in $\Delta$ is logically equivalent to a conjunctive formula (possibly containing fresh existentially quantified variables) over $\Gamma$ (where each atom either uses a relation from $\Gamma$ or is an equality constraint). Importantly, since $\Delta$ is finite, we can precompute all such definitions and store them in a table of constant size. Given $\cQ . (\phi_1 \land \phi_2)$, where $\phi_1 \in \CCC_{\Delta}$, we then replace each constraint in $\phi_1$ by the corresponding set of constraints over $\Gamma$, and treat the fresh variables as existentially quantified, added last in the prefix. We then gradually remove equality constraints by either identifying the two variables in question, or trivially detecting that the formula is false (if one variable in the equality constraint is universally quantified and is preceded by the other one). See~\cite{chen2006rendezvous} for additional details. Since this construction does not affect the backdoor constraints in $\phi_2$ the parameter $|\vars(\phi_2)|$ does not increase.

For the second claim, given $\cQ . (\phi_1 \land \phi_2)$, where $\phi_1 \in \CCC_{\Gamma}$, replace each atom $R(\cdot)$ in $\phi_1$ by $R^{\mathrm{dual}}(\cdot)$, replace each clause $(l_1 \lor \ldots \lor l_r)$ in $\phi_2$ by $(\lnot l_1 \lor \dotsb \lor \lnot l_r)$, and keep the quantifier prefix $\cQ$ intact. Any tree witnessing a yes-instance of one instance can be converted into a tree for the dual formula by complementing each value, since this preserves the truth of $\phi_1$ (by definition of the dual atoms) and of $\phi_2$ (since the literals of every clause have been negated).

    For the third claim, let $\cQ . (\phi_1 \land \phi_2)$ be an instance of $\evaluation{\Gamma^+}$. We introduce two fresh variables $X_0$ and $X_1$ and for each variable $x \in \vars(\phi_1)$ where $\{(0)\}(x) \in \phi_1$ we identify it with $X_0$. Dually, if $\{(1)\}(x) \in \phi_1$ we identify it with $X_1$, and if we detect that we have both $\{(0)\}(x), \{(1)\}(x) \in \phi_1$ then we have a trivial no-instance. Next, since $\{(0)\}(X_0)$ can be viewed as a unary clause $(\overbar{X_0})$, and  $\{(1)\}(X_1)$ as a unary clause $(X_1)$, we delete them from $\phi_1$ and instead take $\phi_2 \cup \{(X_1), (\overbar{X_0})\}$ as the new backdoor. This only increases the parameter by 2 and the reduction is indeed FPT.
\end{proof}

To have any hope of obtaining a CC-backdoor complexity classification we first need to firmly establish the tractable cases of QCSP. The following classification appeared without proof as 
Theorem 6.1 in Schaefer~\cite{sch78}, restated in modern terminology using polymorphisms (see e.g.~\cite{chen2006rendezvous}). First,
define the following operations:
\begin{itemize}
  \item binary $\min(x,y) := x \land y$,
  \item binary $\max(x,y) := x \lor y$,
  \item ternary $\maj(x,y,z) := (x \land y) \lor (y \land z) \lor (z \land x)$,
  \item ternary $\mnrty(x,y,z) = x \oplus y \oplus z$, where $\oplus$ is
    the binary exclusive OR operation.
\end{itemize}

\begin{theorem} \label{the:schaefer}
  Let $\Gamma$ be a Boolean constraint language.
  Then $\QCSP(\Gamma)$ is in P if 
  $\min$, $\max$, $\maj$ or $\mnrty$
  is a polymorphism of $\Gamma$,
  otherwise it is \PSPACE-hard.
\end{theorem}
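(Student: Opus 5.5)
The plan is to prove the two directions separately. The tractable direction goes through the standard correspondence between the four polymorphisms and syntactic classes. The hardness direction goes through Post's lattice together with the Galois connection between polymorphisms and primitive positive definitions already used in \Cref{thm:can_reduce}.

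\textbf{Tractable side.} A Boolean relation has $\min$ (resp.\ $\max$) as a polymorphism if and only if it is definable by a \HORN (resp.\ \DUALHORN) formula. It has $\maj$ if and only if it is \TCNF-definable, and it has $\mnrty$ if and only if it is the solution set of a system of equations over GF(2). If $\Gamma$ is infinite, we rewrite each atom of a given instance into its defining formula over the same variables. I would first check that this rewriting is polynomial, or else assume $\Gamma$ finite as is customary. Each class is then solved in polynomial time.
\begin{itemize}
\item $\QBF(\TCNF)$ is handled by the Aspvall--Plass--Tarjan implication-graph criterion.
\item $\QBF(\HORN)$ is handled by universal reduction followed by unit propagation (Kleine B\"uning et al.). $\DUALHORN$ follows by duality, as in item~2 of \Cref{thm:can_reduce}.
\item $\QBF(\AFF)$ is handled by exactly the pivot/elimination procedure of Section~\ref{sec:cc_affine}. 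Repeatedly eliminate the innermost variable of an equation when it is existential. Reject if some equation has a universal innermost variable: the universal player can then falsify that equation. Accept once the system is empty.
\end{itemize}

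\textbf{Hardness side.} First pass to $\Gamma^+$, whose polymorphisms are all idempotent. Inspecting Post's lattice, the only idempotent Boolean clone containing none of $\min$, $\max$, $\maj$, $\mnrty$ is the clone of projections. Hence $\pol(\Gamma^+)$ consists only of projections. By the Galois connection, $\Gamma^+$ pp-defines every Boolean relation, in particular all $3$-clauses. Replacing each $3$-clause of a $\QBF(3\text{-CNF})$ instance by its pp-definition, with fresh existential variables placed innermost as in the proof of \Cref{thm:can_reduce}, gives PSPACE-hardness of $\QCSP(\Gamma^+)$.

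\textbf{Removing the constants.} It remains to reduce $\QCSP(\Gamma^+)$ to $\QCSP(\Gamma)$, and I expect this to be the main obstacle. I would split according to which unary operations lie in $\pol(\Gamma)$.
\begin{itemize}
\item If $\pol(\Gamma)$ has no constant operation and no negation, then every polymorphism is idempotent. Then $\{(0)\}$ and $\{(1)\}$ are pp-definable from $\Gamma$ already, and we are done.
\item If $\pol(\Gamma)$ contains negation but no constant, introduce two fresh outermost variables $\forall t_0\,\exists t_1$ and enforce $t_0 \neq t_1$ by a pp-definition. I would check that $\neq$ is pp-definable in this case. Replace the constants $0$ and $1$ by $t_0$ and $t_1$. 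The case $t_0 = 1$ is symmetric to $t_0 = 0$, because complementing a satisfying assignment preserves every $\Gamma$-atom.
\item If $\pol(\Gamma)$ contains a constant operation, say $c_0$, then $\Gamma$ is $0$-valid. Here I would argue that the universal player can still simulate the constant $1$ via a fresh outermost universal variable. The idea is to exploit that a $0$-valid $\Gamma$ avoiding $\min$, $\max$, $\maj$ and $\mnrty$ still pp-defines, up to the constant, a relation like $x \to y$ or $(x \lor y \lor \lnot z)$-style gadgets. Alternatively, I would cite Chen's result that for Boolean $\Gamma$ the complexity of $\QCSP(\Gamma)$ equals that of $\QCSP(\Gamma^+)$.
\end{itemize}
If a direct argument for the $0$-valid case resists, I will fall back on that citation, since the paper states the theorem as Schaefer's (restated via~\cite{chen2006rendezvous}) rather than as a new result.
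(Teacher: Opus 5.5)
The paper does not prove this theorem. It quotes it from Schaefer (his Theorem~6.1, stated without proof there), in the polymorphism form of Chen's survey. Your outline is therefore a self-contained reconstruction, not a variant of an argument in the paper, and most of it is sound.

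\begin{itemize}
\item The four syntactic characterisations and the polynomial algorithms are the standard ones. In the affine procedure you should also reject when an equation $(\emptyset,1)$ appears and discard $(\emptyset,0)$.
\item The Post-lattice step is right. The idempotent minimal clones are exactly those generated by $\min$, $\max$, $\maj$ and $\mnrty$, so $\pol(\Gamma^+)$ consists of projections and QBF(3-CNF) reduces to QCSP$(\Gamma^+)$.
\item Your first two constant-removal cases are correct. In the second, $\neq$ is pp-definable because $\pol(\Gamma)$ is then just the clone generated by $\lnot$.
\end{itemize}

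What the citation buys the paper is brevity. What your route buys is that the theorem rests only on Post's lattice and the pp-definition machinery already used in \Cref{thm:can_reduce}.

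The third case is not yet an argument, and the gadgets you suggest fail in some subcases. ``Say $c_0$'' hides the possibility that $\pol(\Gamma)$ contains both constants, and possibly $\lnot$ as well.
\begin{itemize}
\item If both constants are polymorphisms, every relation of $\Gamma$ is both $0$- and $1$-valid. Then $t \to C(\bar x)$ with a single universal $t$ is not pp-definable.
\item If $\lnot$ is also a polymorphism, neither $x \to y$ nor $(x \lor y \lor \lnot z)$ is pp-definable, since neither is closed under complement.
\end{itemize}

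A uniform fix works as follows. Once none of the four operations is a polymorphism, Post's lattice shows that $\pol(\Gamma)$ is essentially unary, i.e.\ contained in the clone generated by $\lnot$, $c_0$ and $c_1$. Hence $\Gamma$ pp-defines every relation that contains $\bar 0$ and $\bar 1$ and is closed under complement.

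For each clause $C$ of a QBF(3-CNF) instance, let $R'_C(\bar x,t,s)$ hold iff one of the following holds:
\begin{itemize}
\item $t=s$;
\item $(t,s)=(0,1)$ and $C(\bar x)$;
\item $(t,s)=(1,0)$ and $C$ holds on the complement of $\bar x$.
\end{itemize}
This relation contains $\bar 0$ and $\bar 1$ via $t=s$, and complementation swaps the last two cases, so $\Gamma$ pp-defines it.

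Prefix the instance with $\forall t\,\forall s$ and put the fresh pp-variables innermost. For $t=s$ every gadget is trivially satisfied. For $(0,1)$ you get the original instance. For $(1,0)$ you get its variable-wise complement, which has the same truth value. This construction removes the need for both $\Gamma^+$ and the case split.

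Your fallback citation is also acceptable. Note, though, that it amounts to citing the dichotomy itself, which is exactly what the paper does.
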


Moreover, there is a neat connection between syntactically defined
classes of formulas (such as $\TCNF$, $\HORN$ and $\AFF$)
and tractable classes of $\QCSP$ defined by polymorphisms (see, e.g.,~\cite{blocks}).

\begin{lemma} \label{lemma:pol_vs_relation}
Let $\Gamma$ be a constraint language. If
\begin{enumerate}
    \item $\maj \in \pol(\Gamma)$ then $\pol(\TCNF) \subseteq \pol(\Gamma)$,
    \item 
    $\mnrty \in \pol(\Gamma)$ then $\pol(\AFF) \subseteq \pol(\Gamma)$,
    \item 
    $\min \in \pol(\Gamma)$ then $\pol(\HORN) \subseteq \pol(\Gamma)$,
    \item 
    $\max \in \pol(\Gamma)$ then $\pol(\DUALHORN) \subseteq \pol(\Gamma)$,
    \item 
    $x \land (y \lor z) \in \pol(\Gamma)$ then $\pol(\EN) \subseteq \pol(\Gamma)$, 
    \item 
    $x \lor (y \land z) \in \pol(\Gamma)$ then $\pol(\EP) \subseteq \pol(\Gamma)$.
\end{enumerate}
\end{lemma}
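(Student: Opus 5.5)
The plan is to reduce every item to the same two facts. The first is the Galois correspondence between polymorphisms and relations. The second is the classical syntactic characterisation of the co-clones generated by each of the six operations. The general principle is this: if $\CCC$ is a class of conjunctive formulas and every relation $R \in \Gamma$ is \emph{expressible} as a conjunction of atoms from $\CCC$ (possibly with repeated variables and no fresh variables), then $\pol(\CCC) \subseteq \pol(\Gamma)$.

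The principle itself is routine. Take $g \in \pol(\CCC)$ and a matrix whose rows lie in $R$. Every row satisfies each atom of the defining conjunction, so applying $g$ column-wise produces a tuple that again satisfies each atom, since $g$ preserves each atom's relation. Hence the result lies in $R$. The same argument goes through with existentially quantified auxiliary variables, i.e.\ pp-definitions, but we will not need them.

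It therefore suffices, for each item, to show that every relation preserved by the given operation is definable by a conjunction of atoms of the corresponding class. I would invoke the known characterisations (Schaefer; see also~\cite{blocks,chen2006rendezvous}):
\begin{enumerate}
\item a relation is closed under $\maj$ iff it is the solution set of a $\TCNF$ formula;
\item it is closed under $\mnrty$ iff it is an affine subspace of $\{0,1\}^r$, i.e.\ defined by an $\AFF$ formula (Gaussian elimination gives the equations);
\item it is closed under $\min$ iff it is defined by a \HORN formula;
\item it is closed under $\max$ iff it is defined by a \DUALHORN formula;
\item it is closed under $x \land (y \lor z)$ iff it is defined by an $\EN$ formula (negative clauses, unit clauses and implications), which is the co-clone $\mathrm{IS}_{10}$/$\mathrm{IS}_{1}$-type description in Post's lattice due to Creignou et al.\ and B\"ohler et al.;
\item it is closed under $x \lor (y \land z)$ iff it is defined by an $\EP$ formula.
\end{enumerate}
Items 4 and 6 follow from items 3 and 5 by dualisation, using $\{f^{\mathrm{dual}} \mid f \in \pol(\Gamma)\} = \pol(\Gamma^{\mathrm{dual}})$ and the fact that dualising an $\EN$ or \HORN formula gives an $\EP$ or \DUALHORN formula.

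For completeness I would sketch the standard proof pattern for one direction, using item 1 as the template. Let $R$ be closed under $\maj$ and let $\psi$ be the conjunction of all clauses of size at most $2$ implied by $R$. Trivially $R \subseteq \psi$. Suppose $t \notin R$. Then for each $s \in R$ there is a coordinate set of size at most $2$ on which $s$ and $t$ differ in a way witnessed by a $2$-clause. The near-unanimity property of $\maj$ shows that $R$ is determined by its binary projections, so some binary projection excludes $t$, and the corresponding $2$-clause lies in $\psi$ and is falsified by $t$.

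Analogous arguments give the other items. For $\min$, closure under intersection lets one build, for each $t \notin R$, a Horn clause implied by $R$ that excludes $t$: take the meet of all tuples of $R$ above the ones of $t$. For $\mnrty$, closure under $x \oplus y \oplus z$ makes $R$ a coset of a linear subspace.

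The main obstacle is item 5 (and dually item 6). The operation $x \land (y \lor z)$ is not one of the standard Schaefer operations, so its co-clone characterisation is less folklore. My first attempt would be to cite the Post-lattice description (the co-clone $\mathrm{IS}_{12}$-style classes of implicative hitting-set-bounded relations). If a self-contained argument is needed, I would argue as follows.

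Let $R$ be closed under $h(x,y,z) = x \land (y \lor z)$. Taking $x = y$ gives $h(x,x,z) = x$, and taking $x = y = z$ shows that $h(x,y,y) = x \land y$, so $R$ is closed under $\min$ and is therefore \HORN by item 3. It remains to show that each Horn clause $(\lnot x_1 \lor \dots \lor \lnot x_m \lor y)$ with $m \geq 2$ implied by $R$ is already implied by implications $x_i \to y$, unit clauses and negative clauses implied by $R$.

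Suppose no single $x_i \to y$ and no clause $\lnot x_1 \lor \dots \lor \lnot x_m$ is implied. Then pick tuples $a, b, c \in R$ as follows: $a$ has all $x_i = 1$, which is possible since the negative clause is not implied, and $b$, $c$ witness failures of implications. Applying $h$ to $a, b, c$ should then produce a tuple with all $x_i = 1$ and $y = 0$, contradicting the implied Horn clause. This combinatorial step is where I expect the real work to lie, in particular choosing $b$ and $c$ so that the result keeps every $x_i = 1$ while forcing $y = 0$ when $m \geq 3$, which may require induction on $m$.
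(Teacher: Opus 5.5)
Your proposal is correct and follows the same route as the paper, which gives no argument beyond appealing to the Galois correspondence and the known co-clone characterisations (\cite{blocks}). One label is off: the co-clone for item 5 is $\mathrm{IS}_{10}$ (dually $\mathrm{IS}_{00}$ for item 6), not $\mathrm{IS}_{12}$, which contains no implications. Your optional self-contained step for $m \geq 3$ needs no real induction: fix $a$ with all $x_i = 1$ and iterate $e \mapsto h(a,e,b^{i})$. Since $h(a,\cdot,\cdot)$ acts as $\lor$ on every coordinate where $a$ is $1$, the final tuple has all $x_i = 1$ and $y = 0$, which gives the contradiction.
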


Additionally, for $d \geq 3$ define the $d$-ary \emph{threshold function} ($t_d$) as 
\[
t_d(x_1, \ldots, x_d)=
\begin{cases}
1 & \text{if } x_1 + \ldots + x_d < d, \\
0 & \text{otherwise}.
\end{cases}
\]
To simplify the forthcoming statements, say that $\Gamma$ is an $\EP$-language ($\EN$-language) if $\pol(\EP) \subseteq \pol(\Gamma)$ ($\pol(\EN) \subseteq \pol(\Gamma)$). We remark that $\Gamma$ is an $\EP$-language if and only if each $R \in \Gamma$ can be described as CNF formula where each clause is in $\EP$~\cite{CREIGNOU20081103}. %
 
 Additionally, the threshold function $t_{d}$ provides a precise description of definability of $(x_1 \lor \ldots \lor x_{d})$ in the sense that if $t_{d} \in \pol(\Gamma)$ then $\Gamma$ does not contain (nor can it define) $(x_1 \lor \ldots \lor x_{k})$ for any $k \geq d$~\cite{blocks}. 

\begin{theorem}
Let $\Gamma$ be a finite $\EP$-language. Then $\evaluation{\Gamma}$ when parameterized by the CC-backdoor size is
\begin{enumerate}
\item FPT if $x \lor (y \land \overbar{z}) \in \pol(\Gamma)$ or $t_3 \in \pol(\Gamma)$,
\item FPT-equivalent to $\evaluation{d\text{-}\EP}$ for some $d \geq 3$ depending on the arity of of $\Gamma$.
\end{enumerate}
\end{theorem}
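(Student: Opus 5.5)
Since $\Gamma$ is an $\EP$-language, every $R \in \Gamma$ can be written as a CNF whose clauses are positive clauses, implications $(x \implies y)$, or negative unit clauses. The proof therefore splits according to which of these clause types $\Gamma$ can actually express. Throughout we use \Cref{thm:can_reduce} to move between languages with nested polymorphism clones, and item 3 to add constants for free.

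For the first case, suppose $x \lor (y \land \overbar{z}) \in \pol(\Gamma)$. This operation preserves positive clauses of any arity and unary relations. It does not preserve the implication $(x \implies y)$: apply it to the tuples $(0,0),(0,1),(1,1)$ for the three arguments in the order $(x,y,z)$, i.e.\ columns $(0,0,1)\mapsto 0\lor(0\land 0)=0$ and $(0,1,1)\mapsto 0 \lor (1\land 0)=0$. Better, pick tuples $(1,1),(0,0),(0,1)$: the first column is $(1,0,0)\mapsto 1$, the second is $(1,0,1)\mapsto 1$, so we obtain $(1,1)$, which is fine. I would search for a witness of this kind to confirm that $(x \implies y)$ is not in $\langle \Gamma\rangle$. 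Combined with the clausal description of $\EP$-languages, this should give that every $R\in\Gamma$ is a conjunction of positive clauses and negative unit clauses. Hence $\pol(\cC) \subseteq \pol(\Gamma)$ for the class $\cC$ of \Cref{lem:ep_no_impl_fpt}, and by item 1 of \Cref{thm:can_reduce} (applied to a finite sublanguage, since the backdoor evaluation only uses finitely many relations of $\cC$ up to arity of $\Gamma$) $\evaluation{\Gamma}$ reduces to that FPT problem.

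If instead $t_3 \in \pol(\Gamma)$, then $\Gamma$ cannot define any positive clause of arity $\geq 3$. The clausal $\EP$-description of each relation then uses only clauses of arity at most $2$. So $\Gamma \subseteq \TCNF$, equivalently $\maj \in \pol(\Gamma)$, and \Cref{the:2cnf_fpt} together with item 1 of \Cref{thm:can_reduce} gives FPT. The main point to verify is that a positive clause of arity $\geq 3$ appearing in a minimal $\EP$-CNF for $R$ is pp-definable from $R$ with constants. I would obtain this by fixing the remaining coordinates to constants via item 3, then applying the $t_d$ characterization.

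For the second case, let $d$ be the maximal arity of a positive clause occurring in the $\EP$-descriptions of relations in $\Gamma$; $d \geq 3$ because $t_3 \notin \pol(\Gamma)$. The upper direction $\evaluation{\Gamma} \le \evaluation{d\text{-}\EP}$ follows by rewriting each atom as its $\EP$-CNF, all of whose clauses lie in $d$-$\EP$, which is item 1 of \Cref{thm:can_reduce}. For the lower direction we must show $\pol(\Gamma) \subseteq \pol(d\text{-}\EP)$ up to constants. That is, $\Gamma^+$ must pp-define the $d$-ary positive clause, the implication, and the negative unit. The positive clause comes from maximality of $d$ and the $t_d$ fact. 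The implication comes from $x\lor(y\land\overbar z)\notin\pol(\Gamma)$: a relation violated by this operation must, in its $\EP$-CNF, essentially contain an implication, which can be isolated by substituting constants and projecting. The negative unit is supplied by the constant $\{(0)\}$. Item 3 then removes the constants.

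I expect the main obstacle to be these two definability claims, in particular extracting a bare implication from a relation not preserved by $x\lor(y\land\overbar z)$. I would handle this via Post's lattice, identifying the co-clones between $\pol(\EP)$ and $\pol(\Gamma)$ rather than by hand.
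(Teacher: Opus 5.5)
This is essentially the paper's proof: the same three cases (no implications, so Lemma~\ref{lem:ep_no_impl_fpt} applies; $t_3$, so the relations are \TCNF and Theorem~\ref{the:2cnf_fpt} applies; otherwise $\Gamma$ with constants has the polymorphisms of $d$-$\EP$ and Theorem~\ref{thm:can_reduce} applies), with the definability facts delegated to Post's lattice where the paper cites the known classification, and your use of item~3 for constants is more careful than the paper's bare claim $\pol(\Gamma)=\pol(d\text{-}\EP)$. Two small repairs are needed. First, the witness you were looking for is the paper's: argument tuples $(0,0),(1,1),(0,1)$ for $x,y,z$ give $(1,0)$, which violates $x\implies y$. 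Second, a prime positive clause should be isolated by existentially projecting away the other coordinates, not by fixing them to constants, since fixing fails when, e.g., $x_1\implies w$ and $w\implies x_1$ are also present; projection works because $x\lor y\in\pol(\Gamma)$ and primality gives tuples restricting to each unit vector, so the projection is exactly the clause.
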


\begin{proof}
We consider each case in turn. First, we observe that $f(x, y, z) = x \lor (y \land \overbar{z})$ does not preserve implication $\{(0,0), (0,1), (1,1)\}$ since $f((0,0), (1,1), (0,1)) = (f(0,1,0), f(0,1,1)) = (1,0) \notin {(0,0), (0,1), (1,1)}$. It follows that each relation in $\Gamma$ can be expressed as a CNF-formula where each clause is either positive, or a negative unit clause, and FPT follows from Lemma~\ref{lem:ep_no_impl_fpt}. Second, if $t_3 \in \pol(\Gamma)$ then we first observe that $t_3$ does not preserve any positive clause of arity more than 2 (applying $t_3$ to three tuples with Hamming weight $> 1$ produces the constant tuple of Hamming weight 0). It follows that each relation in $\Gamma$ can be expressed as $\TCNF$, and FPT follows from Theorem~\ref{the:2cnf_fpt}. 
For the second statement, the condition implies  that $t_{d+1} \in \pol(\Gamma)$ for some $d \geq 3$, which is known to imply that each relation in $\Gamma$ can be defined as a CNF-formula in $d$-$\EP$ and that $\pol(\Gamma) = \pol(d\text{-}\EP)$~\cite{CREIGNOU20081103}, which allows us to apply Theorem~\ref{thm:can_reduce}.
\end{proof}

An analogous statement can be made for $\EN$-languages. For languages that are not $\EP$ or $\EN$, however, we obtain a complete classification.

\begin{theorem} \label{the:classification}
  Let $\Gamma$ be a finite Boolean constraint language that is not an $\EP$- or $\EN$-language.
  Then $\evaluation{\Gamma}$ parameterized by the CC-backdoor size is 
  \begin{itemize}
    \item FPT if $\maj$ or $\mnrty$ are polymorphisms of $\Gamma$,
    \item W[1]-hard if $\min$ or $\max$ are polymorphisms of 
      $\Gamma$ but $\maj$ is not (even on formulas with a single quantifier alternation), and
    \item para\PSPACE-hard otherwise.
  \end{itemize}
\end{theorem}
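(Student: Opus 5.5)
The proof splits into the three bullets, and in each case we reduce to an earlier result through \Cref{thm:can_reduce} and \Cref{lemma:pol_vs_relation}.

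\emph{FPT case.} Suppose $\maj \in \pol(\Gamma)$. By \Cref{lemma:pol_vs_relation}, $\pol(\TCNF) \subseteq \pol(\Gamma)$. Concretely, every relation of $\Gamma$ is definable by a \TCNF formula, since majority-closed relations are exactly the 2-CNF-definable ones. So we translate each $\Gamma$-atom into its (constant-size, precomputed) \TCNF definition, without new variables, and leave $\phi_2$ untouched. This keeps the parameter $|V(\phi_2)|$ the same, and \Cref{the:2cnf_fpt} then applies. The case $\mnrty \in \pol(\Gamma)$ is analogous: every relation is then an affine subspace, hence a conjunction of linear equations, and \Cref{the:aff_fpt} applies.

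\emph{Para\PSPACE-hard case.} Assume none of $\min,\max,\maj,\mnrty$ is a polymorphism. By \Cref{the:schaefer}, $\QCSP(\Gamma)$ is \PSPACE-hard. An instance of $\QCSP(\Gamma)$ is an instance of $\evaluation{\Gamma}$ with empty backdoor, i.e.\ $k=0$. Hence $\evaluation{\Gamma}$ is \PSPACE-hard already for a constant parameter value, which gives para\PSPACE-hardness.

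\emph{W[1]-hard case.} This is the main step. Suppose $\min \in \pol(\Gamma)$ but $\maj \notin \pol(\Gamma)$; the $\max$ case follows by duality, using item 2 of \Cref{thm:can_reduce}. The goal is to show that $\pol(\Gamma) \subseteq \pol(\HORN')$ for a finite Horn language strong enough to encode the MIS reduction, namely 3-\HORN with constants, i.e.\ $\{x\lor\lnot y\lor\lnot z,\ \lnot x\lor\lnot y,\ x,\ \lnot x\}$. Then items 1 and 3 of \Cref{thm:can_reduce}, combined with \Cref{the:horn_hard}, give W[1]-hardness. The reduction from \Cref{the:horn_hard} keeps the $\forall\exists$ prefix, so it also gives the single-alternation claim.

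To get this inclusion, I would use Post's lattice of clones containing $\min$, passing to $\Gamma^+$ so that all polymorphisms are idempotent. The idempotent clones above $\{\min\}$ are few: the clone generated by $\min$ alone, and the clones that additionally contain $x\land(y\lor z)$ or $x\land(y\lor\overbar{z})$, possibly together with $\maj$ or $\mnrty$. The hypotheses remove the options. $\maj \notin \pol(\Gamma)$ rules out the clones containing $\maj$. The assumption that $\Gamma$ is not an $\EN$-language rules out $x\land(y\lor z) \in \pol(\Gamma)$, by \Cref{lemma:pol_vs_relation} together with its converse, since $\pol(\EN)$ is generated by $\min$, $x\land(y\lor z)$ and the constants. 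For $x\land(y\lor\overbar{z})$, I would check that it generates, together with $\min$, a clone that contains $x\land(y\lor z)$ or $\maj$, or else one whose invariant relations are $\EN$; either way it is excluded.

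Hence $\pol(\Gamma^+)$ is exactly the idempotent $\min$-clone, which equals $\pol(\HORN^+)$. The finite sublanguage consisting of 3-ary Horn clauses and constants has this same polymorphism clone, which gives the required inclusion. I expect the Post-lattice bookkeeping in this last case, especially the treatment of $x\land(y\lor\overbar{z})$, to be the main obstacle. I would verify it against the known list of Horn co-clones from \cite{CREIGNOU20081103,blocks}.
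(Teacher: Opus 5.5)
Your proposal is correct and follows the paper's overall plan. For the para\PSPACE-hard case you use Schaefer's theorem with an empty backdoor. For the FPT case you rewrite the $\Gamma$-atoms as \TCNF or affine formulas; your direct substitution is a concrete instance of \Cref{lemma:pol_vs_relation} plus item~1 of \Cref{thm:can_reduce}, which is what the paper cites. For the W[1] case you transfer \Cref{the:horn_hard} along \Cref{thm:can_reduce}.

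The real difference is in the W[1] case, where your argument is more complete than the paper's. From $\min\in\pol(\Gamma)$, \Cref{lemma:pol_vs_relation} only gives $\pol(\HORN)\subseteq\pol(\Gamma)$, which is the direction that yields upper bounds. The paper then asserts hardness for every $\Gamma$ with $\pol(\Gamma)\subseteq\pol(\HORN)$, without explaining why the hypotheses force this inclusion. Your Post-lattice step supplies exactly that justification. The idempotent clones containing $\min$ are $E_2=[\min]$, $S_{10}$, $S_{10}^k$, $S_{12}$, $S_{12}^k$, $M_2$ and $R_2$. All of them except $E_2$ contain $x\land(y\lor z)$, so the assumption that $\Gamma$ is not an \EN-language gives $\pol(\Gamma^+)=E_2=\pol(\HORN)$.

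Routing through $\Gamma^+$ and item~3 is also needed, not cosmetic. Take $\Gamma=\{\lnot x\lor\lnot y\lor z\}$: it satisfies all the hypotheses, yet $\pol(\Gamma)=[\min,0,1]\not\subseteq\pol(\HORN)$, so the inclusion as the paper phrases it fails. An alternative would be to note that the MIS reduction produces only definite Horn clauses, all expressible by $\lnot x\lor\lnot y\lor z$ with repeated variables, and compare with that relation directly.

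A few small corrections:
\begin{itemize}
  \item $\pol(\EN)=[x\land(y\lor z)]$ contains no constants, since \EN contains unit clauses of both signs. Also, you only need the contrapositive of item~5 of \Cref{lemma:pol_vs_relation}, not a converse.
  \item The case you flag as the main obstacle is immediate: for $g(x,y,z)=x\land(y\lor\overbar{z})$ one has $g(x,y,g(x,y,z))=x\land(y\lor z)$.
  \item Similarly $x\land\maj(x,y,z)=x\land(y\lor z)$, so under the non-\EN assumption the hypothesis $\maj\notin\pol(\Gamma)$ holds automatically.
  \item For the single-alternation claim, place the fresh variables of item~1 and the variables $X_0,X_1$ of item~3 as innermost existentials. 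This is harmless because $X_0$ and $X_1$ are forced by unit clauses.
\end{itemize}
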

\begin{proof}
For the tractable cases we (via Lemma~\ref{lemma:pol_vs_relation}) know that $\pol(\TCNF)$ or $\pol(\AFF)$ is contained in $\pol(\Gamma)$. Note that the evaluation problem for each such language is FPT by combining Theorem~\ref{the:2cnf_fpt}, Theorem~\ref{the:aff_fpt}, and  the first item of Theorem~\ref{thm:can_reduce}. 

For W[1]-hardness, we first apply Lemma~\ref{lemma:pol_vs_relation} and Theorem~\ref{the:horn_hard} and the second item of Theorem~\ref{thm:can_reduce} (to cover both \HORN and \DUALHORN). Last, by an application of the first item of Theorem~\ref{thm:can_reduce}, W[1]-hardness extends to any $\pol(\Gamma) \subseteq \pol(\HORN)$ or $\pol(\Gamma) \subseteq \pol(\DUALHORN)$ (note that the fresh existential variables are added last, so we do not increase the quantifier alternations).

Finally, if the two first condition do not hold for a finite $\Gamma$ then it is not preserved by $\maj$, $\mnrty$, $\min$, $\max$, and \Cref{the:schaefer} implies that
  the problem is \PSPACE-hard even with a backdoor of size $0$.
\end{proof}

\fi

Put together, $\evaluation{\Gamma}$ is thus either (1) FPT, (2) W[1]-hard, or (3) FPT-equivalent with $\evaluation{d\text{-}\EP}$ for some $d \geq 3$. It might also be interesting to note that the positive FPT cases hold regardless of the number of quantifier alternations. 

\section{Conclusion} \label{sec:discussion}
In this paper we introduced a variant of the well-known backdoor notion, CC-backdoors, with the goal of obtaining new $\mathrm{FPT}$ algorithms for QBF with unbounded quantifier alternations. 
We began the paper with two positive $\mathrm{FPT}$ results for \TCNF and \AFF but then quickly discovered that $\HORN$ and $\EN$ (and their duals) gave W[1]-hardness. We then managed to prove that the $d$-$\EN$ and $d$-$\EP$ ($d \geq 3$) are the only obstructions for a complete parameterized complexity classification. Hence, resolving the CC-evaluation problem for $d$-$\EP$ is an interesting, but likely challenging, open problem, and the techniques for $\TCNF$ and $\AFF$ do not seem to be applicable.
Nevertheless, there are also other unresolved questions, which we now discuss.

\paragraph{Hardness of Horn and $d$-$\EP$.}
While $\forall\exists3$-Horn being $\mathrm{W}[1]$-hard is an interesting result, it is quite far from the complete picture and a multitude of follow-up questions now arise: is $\forall\exists3$-Horn in \XP? Could it be harder than $\mathrm{W}[1]$?
By taking into consideration the definition of the polynomial hierarchy it then also becomes interesting to ask if $\forall\exists\forall\exists3$-Horn (i.e.\ the prefix is $\forall\ldots\forall\exists\ldots\exists\forall\ldots\forall\exists\ldots\exists$) is $\mathrm{W}[2]$-hard? How about other types of fixed prefixes? And then finally, is $3$-Horn with no prefix restrictions para-$\mathrm{NP}$-hard when parameterized by a CC-backdoor?

\paragraph*{Extensions to quantified constraint satisfaction.}
The QCSP problem can be defined over arbitrary finite domain. Can our Boolean results be generalized to this (significantly more general) setting? This
seems difficult to answer in full generality since the classical complexity of such
QCSPs has not been determined, but there is nothing preventing us from
studying specific languages. For example, QCSPs over linear equations
are tractable and could form suitable backdoor classes, while the
algebraic extension of \TCNF to arbitrary finite domains via {\em
  semilattice operations} is not always
tractable~\cite{DBLP:conf/cp/Chen04}. We also mention the
$0$/$1$/all-constraints~\cite{DBLP:journals/ai/CooperCJ94} as a
promising generalization of \TCNF that supports propagation. Also note
that Boolean QCSPs are always either tractable or \PSPACE-complete, but
finite-domain QCSPs exhibit a much wider range in complexity and can
e.g. also be $\mathrm{NP}$-complete, $\mathrm{coNP}$-complete,
$\mathrm{DP}$-complete, or $\Pi^P_2$-complete; however, curiously,
there is nothing between $\Pi^P_2$ and
\PSPACE~\cite{DBLP:conf/focs/Zhuk24}. Attacking these restricted QCSP
classes with a backdoor approach is tempting since, from an
algorithmic standpoint, they are likely easier to handle than \PSPACE-complete QCSPs.

\section*{Acknowledgements}

The second author was partially supported by 
the Swedish Research Council (VR)
under grant 2021-04371 and 2025-04487.
The fourth author was supported by VR
under grant 2024-00274.

\bibliographystyle{abbrv}
\bibliography{references}

\end{document}